\newtheorem{prop}{Proposition}
\def\gr{}
\def\bbeta{\boldsymbol{\beta}}
\def\bl{\boldsymbol{\lambda}}
\def\kk{\kappa_k}
\def\bk{\boldsymbol{\kappa}}
\def\Z{{\mathbb Z}}
\def\phi{\emptyset}
\def\bX{\boldsymbol{X}}
\def\bC{\boldsymbol{C}}
\def\bc{\boldsymbol{c}}
\title{Synthetic likelihood method for reaction network inference}
\author{
  Daniel F.~Linder\\
 Division of Biostatistics and Data Science\\
  Medical College of Georgia\\
  Augusta University \\
  Augusta, GA 30912 \\
  \texttt{dlinder@augusta.edu} \\
   \And
 Grzegorz A.~Rempa{\l}a \\
Division of Biostatistics\\Mathematical Biosciences Institute\\
 Ohio State University \\
 Columbus, OH 43210 \\
   \texttt{rempala.3@osu.edu} \\
}
\begin{document}
\maketitle

\begin{abstract}
\gr{We propose a novel Markov chain Monte-Carlo (MCMC) method for reverse engineering the topological structure of stochastic reaction networks, a notoriously challenging problem that is relevant in many modern areas of research, like   discovering gene regulatory networks  or  analyzing epidemic  spread.  The method relies  on projecting the original time series trajectories onto information rich summary statistics and  constructing  the appropriate synthetic likelihood function to estimate reaction rates. The  resulting estimates are  consistent in the large volume limit  and are obtained without employing complicated tuning strategies and expensive resampling as  typically  used  by likelihood-free MCMC and approximate Bayesian methods. To illustrate run time improvements that can be achieved with our  approach, we present  a simulation study on inferring rates in a stochastic dynamical system arising from a density dependent Markov jump process. We then  apply the method to two real data examples:  the  RNA-seq data from  zebrafish experiment  and  the  incidence data from 1665 plague outbreak at Eyam, England.}

\end{abstract}

\keywords{Approximate Bayesian computation \and Bayesian methods \and Reaction networks \and Stochastic dynamical system \and Synthetic likelihood}

\section{Introduction}
\label{sec:intro}
 Recent developments in molecular technologies have allowed us to perform complex  biological experiments  aiming   at learning  the principles of signaling networks in living organisms.   For instance,  the knowledge of  a biochemical network of  physiological processes of living cells offers insights into developing targeted therapies for a wide range of  diseases, such as cancer, diabetes and more, by suggesting possible targets in gene pathways. In addition, decoding these mechanisms in organisms with regeneration capabilities, like the  zebrafish, can potentially help biologists at least partially answer questions about why humans lack these abilities. However, reverse engineering biochemical networks from observed data has  proven to be challenging from both statistical and computational standpoints \citep{oates2012network,daniels2015automated}. Indeed, the high-throughput molecular technology pushes the boundary of a classical statistical inferential paradigm, as the molecular quantification methods such as next generation sequencing,  cellular flow cytometry, and fluorescence microscopy (see, e.g. \cite{Perez:2004fk,Wheeler:2008ys}),  provide large amounts of high dimensional, longitudinal data from partially observed and poorly understood biochemical systems.

The goal of most statistical inference problems in this setting is either one of structure (topology)  or parameter estimation \citep{oates2014causal}.  Although related,  these two inference areas  are often considered  separately. The first addresses the structure of the underlying biochemical network, i.e. which chemical species directly alters the production rate of another.  In this paradigm the object of interest is usually a directed graph or a binary adjacency matrix describing the gene regulation structure of the system. Bayesian networks \citep{pearl1985bayesian, pearl1986fusion}, and, more recently, dynamic Bayesian networks, have been used for learning topology of such models since their graphical structure, which is encoded through distributional assumptions, is in essence the desired output \citep{morrissey2010reverse, morrissey2011inferring}. Bayesian networks and other similar graphical models are appealing because they provide biologists with a simple visual representation of the network. However, the fundamental disadvantage of simple graphical approaches is their reduction of a detailed kinetic view of the system to a set of often unrealistic assumptions on the relationships between graph nodes (representing either chemical or molecular species, like genes). For instance, although it is known that the interactions of species in biological systems typically exhibit a high degree of nonlinearity, it is often assumed that nodes are hierarchically (acyclically) linearly related, with the conditional Gaussian distributions. A practical disadvantage of these methods is that their solutions are of non-polynomial time complexity, and hence algorithms used for model fitting lead to suboptimal solutions, for instance by requiring greedy algorithms. Markov chain Monte Carlo (MCMC) routines have been shown to give moderate improvement in terms of finding optimal network structure, but computational issues still persist. 

 In addition to network structure or  topology estimation, the  second related inference area is that of  network parameters (e.g.,  reaction rates) estimation.  Typically, the methods here focus on fitting detailed kinetic models to the data. Their appeal lies in the fact that they rely on more precise representations of the underlying dynamical system \citep{fearnhead2014inference,golightly2005bayesian,golightly2012,oates2012network}. Their disadvantages are numerous however, particularly of the methods  based on exact likelihoods, where the  inference is usually not feasible for systems of relevant sizes, i.e. hundreds to thousands of species and reactions. For that reason the corresponding Bayesian frameworks based on the exact likelihoods are usually not applicable to topology estimation \citep{boys2008bayesian,choi2012inference}. Methods based on approximate likelihood, like the linear noise approximation (or LNA, see \citep{komorowski2009bayesian}), fare somewhat better, but they also usually do not allow for efficient posterior sampling due to the complicated form of the likelihood approximation  (see, for instance, 
 \citep{Linder:2015aa}).
 
In this paper, we develop a fully Bayesian inference  framework  for stochastic reaction networks \gr {on both structure  and parameters} given time course trajectories from the stochastic dynamical systems under mass action kinetics. The method uses summary statistics to form a synthetic likelihood following  the ideas presented by \cite{Wood:2010aa}. The advantage of the proposed methodology is that the synthetic likelihood is based on detailed kinetic models, and its form permits topology and parameter estimation simultaneously, instead of  separately. Our approach is Bayesian and allows efficient computation of the posterior network structure through point mass priors on parameters. An outline of the paper beyond the current section is as follows. In Section 2 we describe the types of dynamical systems under consideration. In Section 3 we describe the synthetic likelihood, the relevant priors, and outline an efficient MCMC algorithm to sample posteriors. The model fitting  is performed using  in silico  reaction network data in Section 4 and RNA-seq data from controlled experiments in the zebrafish in Section 5.  In Section~5 we also give an  example of application of our framework  beyond biochemical systems  by analyzing    historic  data  from   the  1665-1666  plague outbreak in  Eyam, England. The technical details of the MCMC algorithm derivations and R code implementation are available in the online Supplementary Material.

\section{Stochastic reaction network system}
The reaction systems we consider consist of $s$ chemical species, along with a set of  $r$ reaction channels where commonly $r > s$. We denote the system state at time $t$ as the vector $X(t)$ of dimension $s$, containing molecular counts of each species. The constant $\kappa_k \ge 0$ gives the reaction rate or speed of the $k^{th}$ reaction, $k=1, \dots, r$. When the $k^{th}$ reaction occurs at time $t$, the system transitions according to the integer valued vector  $\nu_k^\prime - \nu_k$, where $\nu^\prime_k$ is a vector of non-negative integers representing the number of species produced by reaction $k$ and $\nu_k$ representing the number consumed, i.e. 
\begin{equation*}X(t)=X(t-)+\nu_k^\prime-\nu_k. \end{equation*}
where $X(t-)$ is the system state at the instantaneous time before $t$. We denote by $n$   the system volume, typically the physical volume of the container (e.g., cell) times Avagadro's number, and by $Y_k$ the unit Poisson processes. Our further analysis is on systems that are well-stirred and thermally equilibrated, with processes obeying the classical mass action rate laws, \citep[see, e.g.,][]{gillespie1992rigorous} corresponding combinatorially to the number of different ways we can choose molecular substrates needed for the reaction $k$ to occur \citep[chapter 10]{ethier2009markov}. Defining $|\nu_k|=\sum_i \nu_{ik}$, the rates are
\begin{equation*}
\lambda^{(n)}_k(x)=n\kappa_k\frac{\prod_i\nu_{ik}!\binom{x_i}{\nu_{ik}}}{n^{|\nu_{k}|}}.
\end{equation*}
The nonhomogenous Poisson process with the above propensity function (see  also \citep{gillespie1992rigorous}) gives the system time-evolution equation
\begin{equation}\label{eq:2}
X(t)=X(0)+\sum_kY_k(\int_0^t\lambda_k^{(n)}(X(s))ds)(\nu_{k}^\prime-\nu_k).
\end{equation}
The model in Equation \eqref{eq:2} is often considered the most accurate representation of true system dynamics, and is in the general class of density dependent Markov jump processes (DDMJP). While the class of DDMJP models are often used to describe a wide variety of physical systems, like gene regulatory networks and stochastic epidemics, unfortunately the corresponding inference is complicated by highly intractable exact likelihoods. 

Letting $c=n^{-1}x$ we obtain species concentrations \gr{(say,   in moles per unit volume or relative density)}. The asymptotic notion of a large volume limit represents the system's  behavior as its volume increases to infinity while the species numbers are kept at  constant concentrations. This gives the classical deterministic law of mass action ordinary differential equation (ODE), which is referred in the chemical literature \citep{van1992stochastic} as the {\em reaction rate equation} 
\begin{equation}\label{eq:rre}
\dot{c}(t)=\sum_k\kappa_k \prod_i c^{\nu_{ik}}_i(t)(\nu_k^\prime-\nu_k).
\end{equation}  
The solution of the above ODE $c^{\bbeta}$ is parameterized by a vector $\bbeta$, a linear combination of the kinetic rates $\kappa_k$ of interest, as well as the initial condition $c(0)$. In what follows we will focus on estimation of $\bbeta$ under the assumption that it is a linear transformation of the $\kappa_k$'s. It is well known that identifiability of reaction networks is a nontrivial problem and is only guaranteed when certain reaction vectors are linearly independent for each source complex \citep{craciun2008identifiability}. However, the reparameterization from $\bk$ to $\bbeta$ can be done so as to ensure that $\bbeta$ is identifiable, as is the case for the examples we consider below. Results in \cite{Remp12} show that the least squares estimator, $\hat{\bbeta}$, which minimizes the distance between the data and the solution to \eqref{eq:rre} is consistent and asymptotically normal. These asymptotic properties are also true for solutions to the so-called {\em martingale estimating functions}, which are a generalization of the least squares estimator in this case \citep{bibby1995martingale}. Both methods produce statistics that are easily obtained from time course trajectories of the system. In what follows, we use the asymptotic properties of these statistics \gr{and their estimating equations}  to form a synthetic likelihood. The synthetic likelihood serves as a surrogate for, often intractable, exact likelihood and may be used in the same way to perform the usual  Bayesian inference.    

\section{Synthetic likelihood}\label{sec:lfm}  Ideally, parametric inference would be based on the likelihood function since, under the typical regularity conditions, maximum likelihood estimates enjoy good asymptotic properties, such as consistency and efficiency.  \gr {The likelihood approach also gives one  the ability to perform Bayesian inference. 
Unfortunately, the usage of  exact likelihood methods for parameter estimation in stochastic biochemical networks  faces some challenges due to the need for   computationally demanding routines, like, for instance, the particle filters \citep{Golightly:2011aa}. For that reason  many authors  have  focused  on approximate likelihood methods for reaction networks \citep{fearnhead2014inference,golightly2005bayesian,golightly2012,oates2012network}. However,  major practical limitation of many such methods, for instance,  approximate Bayesian computation (ABC) is their slow convergence and poor mixing in high dimensional problems  \citep{csillery2010approximate}.  To circumvent these technical complication  we propose here  an alternative method that is based on the idea  of  synthetic likelihood \citep{Wood:2010aa}.  }   

To introduce some notation, consider the data $D_j, \hspace{2mm} j=1,\ldots, N$ or the $j^{th}$ system trajectory consisting of $X_{ij}^{(n,\beta)} \in \Z^s_{\ge 0}$, which are the observed counts of  $s$ species, measured at a discrete grid of timepoints, $t_{ij}$, ($t_{1j},\dots,t_{m_jj}=T_j<\infty)$ not necessarily equidistant across trajectories, and with possibly different endpoints; i.e., $t_{m_jj}$ not necessarly equal to $t_{m_kk}$.  We assume that this observed data arrives from trajectories of the process for which the system volume $n$ is fixed and known and define  the concentration values as $C_n(t_{ij})=\bX_j(t_{ij})/n$. The LSE for the $j^{th}$ observation is then any solution of the optimization problem 
\begin{equation}\label{eq:LSE}
\hat{\bbeta}_j=\underset{\bbeta}{\textrm{argmin}}  \sum_{t_{ij}} ||C_n(t_{ij})-c^{\bbeta}(t_{ij})||_2^2
\end{equation}
or equivalently any solution to the following estimating equation
\begin{equation}\label{eq:EE}
 \sum_{t_{ij}} \partial c^{\bbeta}(t_{ij})(C_n(t_{ij})-c^{\bbeta}(t_{ij}))=0.
\end{equation}
Asymptotic properties, (as $n \rightarrow \infty $), and the regularity conditions for the consistency and normality of these solutions were discussed in \cite{Remp12}, with all systems under consideration here satisfying these conditions. The expression above is similar in form to the generalized estimating equations (GEEs) \citep{Zeger:1986aa}. GEEs have been used extensively for correlated and longitudinal data where a parametric form of the mean is known but the likelihood function is not readily available. The appeal of GEE estimates is that they exhibit many of the same properties of maximum likelihood estimates \citep{Zeger:1986aa}, even when the correlation structure of the dependent observations is misspecified. \\
\indent The ideas from GEEs were extended to discretely observed diffusion processes in \cite{bibby1995martingale} by considering so-called martingale estimating functions (MEF). Defining the filtration $\mathcal{F}_{ij}=\sigma(X(t_{1j}),\dots,X(t_{ij}))$, scaling the process by $n$, consider all zero-mean $P_{\bbeta}$-martingale estimating functions of the form 
\begin{equation}\label{eq:MEF}
G_j(\bbeta)=\sum_{t_{ij}} g_{(i-1)j}(\bbeta)(C_n(t_{ij})-F_i(C_n(t_{i-1j}),\bbeta))
\end{equation}
where $g_{(i-1)j}$ is $\mathcal{F}_{(i-1)j}$ measurable and $F_i(x,\bbeta)=E_{\bbeta}(C_n(t_{ij})|x)$. It was also shown in  \cite{bibby1995martingale} that the optimal estimating function, in the sense of the smallest asymptotic dispersion and where the data is assumed to arise from discrete observations of a diffusion, is of the form $g_{(i-1)j}=\dot{F}_i(C_n(t_{(i-1)j}),\bbeta)\rho_i(C_n(t_{(i-1)j}),\bbeta)^{-1}$ with $\rho_i(C_n(t_{(i-1)j}),\bbeta)=Var_{\beta}(C_n(t_{ij})|C_n(t_{(i-1)j}))$. The optimal estimating function may be approximated by, $\tilde{g}_{(i-1)j}$, with the substitutions $\tilde{F}_i(x,\bbeta)=c^{\bbeta}(t_{ij})$ as well as  $\tilde{\rho}_i(C_n(t_{(i-1)j}),\bbeta)=\hat{Var}_{\bbeta}(C_n(t_{ij})|C_n(t_{(i-1)j}))$. Fitting may  then be done iteratively by first updating the weights in \eqref{eq:MEF} for the current value of   $\bbeta=\bbeta^{(k-1)}$  with  $\tilde{g}_{(i-1)j}(\bbeta^{(k-1)})$ and  then updating $\bbeta=\bbeta^{(k)}$ by solving \eqref{eq:MEF} and repeating this process until convergence, or by replacing $\tilde{\rho}$ with its empirical estimate.

\subsection{Form of the Synthetic  Likelihood} 
Given the partially observed trajectory, $\boldsymbol{X}_j$, we denote by $\hat{\bbeta}_j$ the solution of either \eqref{eq:EE} or \eqref{eq:MEF}. In the mass action setting, the ODE coefficients $\bbeta$ are linear combinations of the reaction rates and can be written as $\bbeta=Q\bk$ where $Q$ is a $d\times r$ matrix (see, e.g., \citep{Linder:2013aa}). \gr{Some  specific examples of $\bbeta$ and  $Q$ are given in Section 4 below.}  It is straightforward to show that $\hat{\bbeta}_j$ is asymptotically Gaussian; i.e., $\sqrt{n}(\hat{\beta}-Q\bk) \rightarrow \mathcal{N}(0,\Sigma)$ as previously mentioned, see Supplementary Material.
The normality of $\hat{\beta}$  allows to  express  the synthetic likelihood for the $j^{th}$ trajectory (replicate) as  
\begin{equation}\label{eq:SL}
SL_j(\bk, \Sigma|\hat{\bbeta}_j):=f(\hat{\bbeta}_j|Q \bk, \Sigma)=(2\pi)^{-d/2}|\Sigma/n|^{-1/2}\exp\bigg\{-\frac{1}{2}(Q\bk-\hat{\bbeta}_j)^\top (\Sigma/n)^{-1}(Q\bk-\hat{\bbeta}_j) \bigg\}
\end{equation}
where $\Sigma$ is the corresponding limiting covariance matrix. We have maintained the standard likelihood notational convention by writing it as a function of parameters, given data. 
This is in sharp contrast to the majority of ABC type methods, where such data summaries are typically chosen in an ad-hoc fashion. Unfortunately, the Pitman-Koopman-Darmois (PKD) theorem essentially guarantees the failure of ABC type methods, since it states  that the existence of  finite dimensional sufficient statistics is  a unique property of the exponential family. \gr {The implications of PKD are thus  quite disappointing in the context of ABC methods, where typically  the interesting (non-analytic) likelihoods are outside of the  exponential family. Consequently,  ABC methods based on finite dimensional summary statistics are typically guaranteed to suffer  information loss vis a vis  exact likelihood inference.} 

Consider the $j^{th}$ trajectory and the $(dm_j) \times 1$ vector of stacked species concentrations, $\bC_j=\textrm{vec}(\bX_j)/n$; i.e., by stacking the concentration vectors at each timepoint. The central limit theorem for DDMJP then gives $\sqrt{n}(\bC_j-\bc^\kappa) \Rightarrow \mathcal{N}(0,\Sigma_\kappa)$, so that the trajectory data likelihood, \gr{ $L(\bk|D)=L(\bk|\textrm{vec}(\bX_j)/n)$ converges as $n\to \infty$ to  the Gaussian likelihood $(2 \pi)^{-dm_j/2}|\Sigma_\kappa/n|^{-1/2} \exp \{-1/2(\bC_j-\bc^\kappa)^\top (\Sigma_\kappa/n)^{-1} (\bC_j-\bc^\kappa)\}$. }The law of large numbers and consistency of $\hat{\beta}_j$ imply that $\bC_j \approx \bc^{\hat{\beta}_j}$, so that 
\begin{equation*} L(\bk|\textrm{vec}(\bX_j)/n) \approx (2 \pi)^{-dm_j/2}|\Sigma_\kappa/n|^{-1/2} \exp \{-1/2(\bc^{\hat{\beta}_j}-\bc^\kappa)^\top (\Sigma_\kappa/n)^{-1} (\bc^{\hat{\beta}_j}-\bc^\kappa)\}. \end{equation*} \gr  {The  above  approximation is  seen to hint at   a notion of asymptotic sufficiency (AS) in the sense of \cite{le1956asymptotic}. AS is  essentially an asymptotic Neyman-Fisher factorization, and it implies that, at least asymptotically, the chosen statistics contain meaningful information, thus offering some notion of  efficiency, even in light of the PKD theorem.  However, establishing  AS property more formally requires careful analysis   of  specific inferential problems on  case by case basis.  
For  further discussion, see, for instance,   \cite{frazier2016}.   }
%

It is important to note that in the above arguments, $\Sigma_\kappa$ is the process covariance matrix for the stacked concentration vector in the approximate data likelihood and not the covariance of the summary statistics ($\Sigma$) in the synthetic likelihood. In fact, since $\Sigma$ also depends on $\bk$, its usage in the likelihood  function above would break down  the  conjugacy  and  the efficient MCMC via Gibbs sampling would be no longer available. Thus, when we have a single trajectory, as in the Eyam plague example below, we use the asymptotic covariance matrix of the summary statistic, $\hat{\Sigma}_{\hat{\beta}_j}$ for $\hat{\beta}_j$, to form the synthetic likelihood, although we have suppressed the explicit subscript notation for simplicity. When multiple trajectories are available, we   additionally  have the ability to assess the between trajectory variation of the summary statistics.  See below for details. 

The transition from the original likelihood to the synthetic likelihood shifts our analysis into the setting of a classical  linear model. Thus, this approach is vastly different from most of the currently used  ones. Specifically, methods based on the reaction rate equation use ODE solutions as the means of corresponding Gaussian likelihoods \citep{girolami2008bayesian},  for instance,  those based on the diffusion approximation and the LNA  \citep{komorowski2009bayesian,fearnhead2014inference}. These approaches, while being approximations, still face serious computational challenges. The main bottleneck for inference, particularly Bayesian one, in these models is non-conjugacy,  since the ODE means are not linear in the parameters. As such, each iteration of MCMC requires solving complex systems of nonautonomous ODEs at each proposal value, like in \cite{girolami2008bayesian}. This can make tuning proposal distributions with good acceptance properties difficult, leading to chains with poor mixing. The ABC methods do not fare much better, since they require summary statistic, distance measure, and tolerance selection that are often ad-hoc. These problems  severely limit the applicability and scalability of the current approximate procedures. In contrast, our synthetic likelihood approach circumvents the need to choose distance measures and tolerance levels \gr{by using data summaries that are well understood, and allow for their analysis via standard MCMC. It also only involves solving ODE systems once (to compute initial summary statistics and covariances) thus avoiding the iterative usage of  the ODE solver. Finally, the synthetic likelihood  form leads itself to the efficient formulation of the MCMC computation steps via a Metropolis-within-Gibb's procedure.}

\subsection{Prior Specification} 
The Bayesian approach to network estimation can be addressed by using specialized priors that allow coefficients to be in the model or out of the model during iterations of MCMC, leading to positive  \gr{posterior probabilities of zero value.}Various mixtures of mutually singular distributions, each dominated by $\sigma$-finite measures are a natural choice. Here, we assume a discrete mixture of the point mass at zero, $\delta_0$, and a continuous distribution, $F$, supported on the positive reals and dominated by the Lebesgue measure  $\Pi=(1-\omega)\delta_0+\omega F$. Restricting the support of $F$ to positive reals is necessary to convey the fact that kinetic parameters are non-negative. \gr {It was shown in \cite{gottardo2008markov} that  when the  prior probability of non-zero rate for  reaction $k$ is  $\omega_k$,   the corresponding density for $\kappa_k$ is of the form}
\begin{equation}\label{eq:SLprior}
\pi(\kappa_k):=\frac{d \Pi}{d(\delta_0+\mu)}=(1-\omega_k)\mathbb{I}_0(\kappa_k)+\omega_k f(\kappa_k) \mathbb{I}_{\mathbb{R}^+}(\kappa_k) \textrm{   a.e   }\delta_0+\mu
\end{equation}
where $\mathbb{R}^+=\{x:x>0\}$. Priors of the form \eqref{eq:SLprior} lead to minimax rates of estimation and posterior contraction on sparse sets, provided the tails of $F$ are exponential or heavier (see, e.g., \cite{johnstone2004needles, castillo2012needles}). Thus, priors of the form \eqref{eq:SLprior} are optimal under certain criteria and are usually considered the theoretical gold standard for variable selection in the Bayesian setting. To that end, we assume that $f(\kappa_k|\lambda_k)=\lambda_k \exp\{- \lambda_k \kappa_k \} \mathbb{I}_{\mathbb{R}^+}(\kappa_k)$, which is the exponential, or one-sided Laplace,  distribution. Our choice of the exponential is motivated by several important factors: it naturally restricts the support of $\bk$ to positive reals, it satisfies the tail requirements mentioned previously, and from the information theoretic  perspective  it is  the maximum entropy prior with mean $\frac{1}{\lambda_k}$ \gr{(see \citep{robert2007bayesian}, Chapter 3)}. It may also be rewritten hierarchically as $f_1(\kappa_k|\tau_k)=\sqrt{\frac{2 }{\pi \tau_k}}\exp\{-  \kappa_k^2/2\tau_k \} \mathbb{I}_{\mathbb{R}^+}(\kappa_k)$, with $f_2(\tau_k|\lambda_k)=\frac{\lambda_k^2}{2}\exp \{-\lambda_k^2 \tau_k/2 \}$, via the identity
$\int_0^\infty f_1(\kappa_k|\tau_k) f_2(\tau_k|\lambda_k) d\tau_k=f(\kappa_k|\lambda_k)$
\citep{west1987scale,andrews1974scale}.  By writing the prior as a scaled mixture of truncated normals, the form of $\bk$'s full conditionals becomes analytically tractable, as is demonstrated in the Supplementary Material. This gives significant computational advantage to our framework, since priors like \eqref{eq:SLprior} often lead to non-conjugacy, in which case full Metropolis-Hastings (MH) is required for sampling. Although adaptive MH-based MCMC algorithms have been designed to produce chains with desirable acceptance rates \citep{ji2013adaptive}, when the likelihood is complicated or not analytic, as in the present situation, such tuning is not straightforward. Thus when appropriate tuning cannot be done, the resulting chains may exhibit poor mixing and require extremely long run times for sufficient exploration of parameter space. In what follows, we detail the Gibbs sampling procedure, which performs tuning automatically, for posterior sampling under our hierarchical model. Our primary interest is in obtaining the posterior probability that reactions are true, which allows one to infer the reaction network structure. In order to accommodate varying experimental conditions, such as differences in measurement error or experiments with data collected at different timepoints, we place a Wishart prior on the covariance matrix,  $\Sigma | \Psi \sim \mathcal{W}(v,\Psi)$. This is in contrast to the common assumption of the inverse-Wishart, see \cite{daniels1999nonconjugate,bouriga2013estimation,alvarez2014bayesian} for \gr{some} examples. Our approach is similar to  \cite{chung2015weakly}, in that we assume a Wishart (not inverse-Wishart) prior for $\Sigma$ that leads to desirable modal properties of the posterior. We select the hyperparameter $\Psi$ to be the empirical covariance of $\beta$, and when this estimate is not full rank we add a small regularization term, $10^{-5}$, to its diagonals.
The hierarchical model under consideration is then
\begin{align}\label{eq:Model}
SL(\bk,\Sigma|D)=&\prod_j^N SL_j(\bk,\Sigma|\hat{\bbeta}_j) \nonumber \\
\pi_1(\bk|\bl)=&\prod_{k}\bigg((1-\omega_k)\mathbb{I}_0(\kappa_k)+\omega_k \lambda_k \exp\{- \lambda_k \kappa_k \} \mathbb{I}_{\mathbb{R}^+}(\kappa_k)\bigg) \nonumber \\
\pi_2(\Sigma|\Psi)=&\frac{|\Psi|^{-v/2}|\Sigma|^{(v-d-1)/2}\exp\{-\frac{1}{2}\mathrm{tr}\Psi^{-1} \Sigma \}}{2^{vd/2}\Gamma_d(\frac{v}{2})} 
\end{align}
where $j=1,...,N$ indexes the $N$ independent trajectories of the process. The model contains a covariance term, $\Sigma$, and this parameter may represent intrinsic stochastic noise, as well as measurement error which will dominate in the large volume limit. This parameter is not of particular interest for network or kinetic rate estimation, and a clear advantage of the  Bayesian framework is the ability to marginalize this nuisance parameter \gr{out of}  the posterior. Importantly, \gr {we show }  that the marginal synthetic posterior distribution, $\pi (\bk|D)$ is unimodal when $v \ge N+d+1$ and $\lambda_k=\frac{1-\omega_k}{\omega_k}$ (see Supplementary Material). This is a key property of the proposed method that  not only guarantees identifiability of the reaction network but  also  \gr{contributes to the observed  rapid mixing of the MCMC procedure}.

\subsection{ Posterior Computation} 
Here we describe the algorithm to efficiently sample from the posterior distribution with a Metropolis-within-Gibbs sampler. To simplify notation,  define $U:=nNQ^\top \Sigma^{-1} Q$ and $S:=nQ^\top \Sigma^{-1} \sum_{j=1}^N\hat{\bbeta}_j$. 
The term $\omega_k$ is the prior probability that reaction channel $k$ is true \gr{(non-zero)}. The  posterior computation can then be performed with the following steps. \\
{\bf Algorithm 1.}\\
 Step 1. For $k=1,...,r$, compute $\omega_k^\ast=\frac{\omega_k}{(1-\omega_k)/M_k+\omega_k}$ where
 \begin{equation}
 M_k=2  \sqrt{\frac{1}{\tau_k^2(u_{kk}+1/\tau_k^2)}}\exp\bigg\{ \frac{(s_k-\sum_{i \ne k}u_{ik}\kappa_i)^2}{2(u_{kk}+1/\tau_k^2)} \bigg\} \bigg (1-\Phi \Big(0,\frac{(s_k-\sum_{i \ne k}u_{ik}\kappa_i)}{(u_{kk}+1/\tau_k^2)},(u_{kk}+1/\tau_k^2)^{-1} \Big) \bigg).
\end{equation}
With probability $\omega_k^\ast$, sample $\kappa_k$ from the truncated Gaussian and then $\frac{1}{\tau_k^2}$ from inverse Gaussian, 
\begin{align*}
\kappa_k \sim& \mathcal{N}\bigg(\big(s_k-\sum_{i \ne k}u_{ik} \kappa_i\big)/(u_{kk}+1/\tau_k^2),(u_{kk}+1/\tau_k^2)^{-1}\bigg)\mathbb{I}_{\mathbb{R}^+}(\kappa_k) \\
\frac{1}{\tau_k^2} \sim & I\mathcal{G}(\frac{\lambda_k}{\kappa_k},\lambda_k^2)
\end{align*}
Else, set $\kappa_k=0$ with probability $1-\omega_k^\ast$.
\\
Step 2. Given the current sample $(\bk, \Sigma)$, propose $\Sigma^\ast$ from Wishart, $\Sigma^\ast \sim \mathcal{W}\big(v^\prime, \Sigma \big)$ \\ 
Step 3. Accept $\Sigma^\ast$ with probability $\min \bigg \{1, \frac{\pi(\bk,\Sigma^\ast|D)\mathcal{W}(\Sigma|v^\prime,\Sigma^\ast)}{\pi(\bk,\Sigma|D)\mathcal{W}(\Sigma^\ast|v^\prime,\Sigma)} \bigg \}$ \\
Step 4. Recompute $U:=nNQ^\top \Sigma^{-1} Q$ and $S:=nQ^\top \Sigma^{-1} \sum_{j=1}^N\hat{\bbeta}_j$. Return to step 1. \\
\\
In the above notation $u_{ij}=(U)_{ij}$, $s_k$ is the $k^{th}$ element of $S$, $\mathcal{N}(a,b)$ is a Gaussian random variate with mean $a$ and variance $b$, $\mathcal{W}(\Sigma|v,V)$ is the Wishart density evaluated at $\Sigma$ with scale matrix $V$ and degrees of freedom $v$, $I\mathcal{G}(a,b)$ is an inverse Gaussian random variate with mean $a$ and scale $b$, and $\Phi(x,a,b)$ is the Gaussian cumulative distribution function with mean $a$ and variance $b$ evaluated at $x$. We have found that a proposal degrees of freedom, $v^\prime=n$, gives relatively good acceptance rates, between $15-30$\% in our empirical studies. Derivations of the full conditionals may be found in the Supplementary Material. Hence, sampling from the full conditional of $\bk$ is done by sampling from each $\kappa_k$'s second mixture with probability $\omega_k^\ast$ and from the degenerate component with corresponding probability $1-\omega_k^\ast$. Expressions for the individual parameters' and weights' full conditionals, $\kappa_k|\dots$ and $\omega_k^\ast$, allow for sampling from the target distribution by local parameter-wise updates. Although global moves can lead to optimal acceptance rates, tuning proposals that must be absolutely continuous with respect to measures like $(\delta_0 +\mu)^r$ is not straightforward, and even less so for likelihood-free methods. Additionally, the scheme allows inference about posterior reaction probabilities to be improved via Rao-Blackwellization \citep{gottardo2008markov}.  In the remaining sections we illustrate the usage and performance of Algorithm~1 with both simulated and real data examples.  

\section{Simulation Study}
To illustrate network topology estimation using the proposed synthetic likelihood approach, we consider a molecular reaction network partially motivated by the heat shock response. Heat shock transcription factors and protein chaperones are critical to ensure proteins fold into specific three-dimensional structures. Newly formed proteins and proteins within cells that have been challenged with damage risk protein misfoldings that may effect their functional activity \cite{hartl2011molecular}. Accumulation of such toxic species (misfolded proteins) has been implicated in the progression of certain neurodegenerative diseases \cite{neef2011heat,hartl2011molecular}, and has lead to research into development of theraputic targets that restore proteostasis \cite{calamini2012small}. Hence, modeling the cells ability to employ this chaperone machinery to acheive proteostasis may reveal theraputic targets. As a toy {\it in silico}, we consider the following reaction network that has transcriptional and chaperone components, along with redundant reactions, to compare the proposed methodology with existing ones via simulation.  
\begin{align}\label{eq:sim}
\phi \rightarrow^{\kappa_1} & P_1 \hspace{10mm } \phi \rightarrow^{\kappa_2} P_2 \hspace{10mm} P_1 \rightarrow^{\kappa_3} R_1 \hspace{10mm} P_2 \rightarrow^{\kappa_4}  R_1 \nonumber \\
P_1 \rightarrow^{\kappa_5}  P_1 + P_2&  \hspace{10mm} P_2 \rightarrow^{\kappa_6}  P_1+P_2 \hspace{10mm} R_1 \rightarrow^{\kappa_7} P_2 \hspace{10mm}R_1 \rightarrow^{\kappa_8} 2R_1\\
R_1+P_2 \rightarrow^{\kappa_9} \phi  &   \hspace{10mm} R_1 \rightarrow^{\kappa_{10}}   \phi \hspace{10mm} P_1 \rightarrow^{\kappa_{11}}  \phi \hspace{10mm} P_2 \rightarrow^{\kappa_{12}} \nonumber \phi 
\end{align}
Here $P_1$ and $P_2$ are representing proteins and $R_1$ represents gene RNA expression, with the transitions in/out of $\phi$ indicating loss/creation of a molecule. For the system of reactions \eqref{eq:sim} the mass action ODE \eqref{eq:rre} parameterized by $\bbeta$ specializes to  
\begin{align}\label{eq:ODE}
\frac{dP_1}{dt}=&\beta_1-\beta_2P_1+\beta_3P_2 \nonumber \\ 
\frac{dP_2}{dt}=&\beta_4+\beta_5R_1+\beta_6P_1-\beta_7P_2-\beta_8R_1P_2 \\
\frac{dR_1}{dt}=&-\beta_9R_1+\beta_{10}P_1+\beta_{11}P_2-\beta_8R_1P_2 .\nonumber
\end{align}
 We note that in this particular case $\bbeta=Q\bk$,  where 

\[ Q=\begin{pmatrix} 
1 & 0 & 0 & 0 & 0 & 0 & 0 & 0 & 0 & 0 & 0 & 0   \\ 
0 & 0 & 1 & 0 & 0 & 0 & 0 & 0 & 0 & 0 & 1 & 0    \\  
0 & 0 & 0 & 0 & 0 & 1 & 0 & 0 & 0 & 0 &0 & 0      \\ 
0 & 1 & 0 & 0 & 0 & 0 & 0 & 0 & 0 & 0 & 0 & 0      \\ 
0 & 0 & 0 & 0 & 0 & 0 & 1 & 0 & 0 & 0 & 0 & 0     \\ 
0 & 0 & 0 & 0 & 1 & 0 & 0 & 0 & 0 & 0 & 0 & 0     \\ 
0 & 0 &0 & 1 & 0 & 0 & 0 & 0 & 0 & 0 & 0 & 1       \\ 
0 & 0 & 0 & 0 & 0 & 0 & 0 & 0 & 1 & 0 & 0 & 0      \\ 
0 & 0 & 0 & 0 & 0 & 0 & 1 & -1 & 0 & 1 & 0 & 0      \\ 
0 & 0 & 1 & 0 & 0 & 0 & 0 & 0 & 0 & 0 & 0 & 0       \\ 
0 & 0 & 0 & 1 & 0 & 0 & 0 & 0 & 0 & 0 & 0 & 0  \\
\end{pmatrix} \]

In our present setting $s=3$, $d=11$ and $r=12$. For our simulation study we generated trajectories from the pure jump process of the system of reactions  \eqref{eq:sim} via Gillespie's algorithm (see, e.g., \citep{van1992stochastic}) with parameters $\boldsymbol{\kappa}=[1,0,1,0,0,1,0,0,0.5,1,1,1]^\top$ and initial molecular copy numbers of 50 for each of the three species. Note that under this set of kinetic parameters $P_1$ enters the system from an external source and acts as a transcription factor for $R_1$ and a chaperone for $P_2$, which we model by reactions  1, 3, and 6  (these are labeled by their respective $\kappa$ subscripts in \eqref{eq:sim}). $P_2$ acts as a suppressor of the transcription of $R_1$ through reaction 9 and all species have a natural degradation rate through reactions 10, 11, and 12 respectively. All others reactions are superfluous. 

We calculated the required LSE and MEF -based statistics by fitting the mass action ODE in \eqref{eq:ODE} to $N=1,2,3, \textrm{ and } 5$ simulated stochastic trajectories from \eqref{eq:sim}. We set the degrees of freedom hyperparameter to $v=N+d+1$, $\omega_k=0.5$ for equal \textit{a priori} probability that a reaction channel is true or false, and $\lambda_k=\frac{1-\omega_k}{\omega_k}=1$, which in combination with $v=N+d+1$ guarantees a unimodal posterior. For comparison, we perform analysis using the adaptive MCMC routine of \cite{vihola2012robust} with the LNA likelihood approximation and uniform priors on the logarithm of parameter values \cite{finkenstadt2013quantifying}. Additionally, we implemented the particle filtering routine of \cite{Golightly:2011aa}, which computes unbiased likelihood estimates within MCMC using 100 particles generated via Gillespie's algorithm and assigned uniform priors on the logarithm of parameter values. Tables \ref{table:posteriorlinksLSE} and \ref{table:posteriorlinksMEF} contain the posterior median estimates from chains of 50,000 MCMC samples from the MEF-based and LSE-based synthetic likelihood method. For the class of point mass mixture priors, the posterior median has been proven to be a legitimate thresholding rule, see \cite{johnstone2004needles}, so it may be used for both variable selection and estimation simultaneously in our setup. Tables \ref{table:posteriorLNA} and \ref{table:posteriorPF} give posterior means from LNA analysis with 50,000 samples and 10,000 samples from the particle marginal Metropolis-Hastings algorithm respectively. All algorithms are coded in R and run on a personal desktop computer with 2.7 GHz clock speed.

\begin{table} 
\caption{{\label{table:posteriorlinksLSE} Posterior median of $\kappa_k$ from 50,000 MCMC samples for  $N=1,2,3,5$ trajectories using LSE.}}
\centering
\begin{tabular}
{c c c c c c c c c c c c c c c} \hline 
$\kappa$ & $\kappa_1$ & $\kappa_{2}$ & $\kappa_{3}$ & $\kappa_{4}$ & $\kappa_{5}$ & $\kappa_{6}$ & $\kappa_{7}$ & $\kappa_{8}$ & $\kappa_{9}$ & $\kappa_{10}$ & $\kappa_{11}$ & $\kappa_{12}$ & Time \\ \hline
Truth&1&0&1&0&0&1&0&0&0.5&1&1&1&Seconds \\
$N=1$ & 0.38 & 0.00 & 0.63 & 1.12 & 0.00 & 8.56 & 0.00 & 0.00 & 1.96 & 0.47 & 0.21 & 0.00&47.33  \\
$N=2$ & 0.78 & 0.00 & 0.70 & 1.03 & 0.00 & 4.34 & 0.00 & 0.00 & 1.24 & 0.60 & 0.89 & 0.00&49.91  \\
$N=3$ &0.60&0.00&0.86&0.68&0.00&6.61&0.00&0.00&1.57&0.83&0.38&0.29&50.20 \\
$N=5$ &0.74&0.00&0.92&0.41&0.00&5.50&0.00&0.00&1.14&0.88&0.58&0.58&52.23 \\
\end{tabular}
\end{table}
\begin{table}\caption{\label{table:posteriorlinksMEF}Posterior median of $\kappa_k$  from 50,000 MCMC samples for $N=1,2,3,5$ trajectories using MEF.}
\centering
\begin{tabular}
{c c c c c c c c c c c c c c c} \hline 
$\kappa$ & $\kappa_1$ & $\kappa_{2}$ & $\kappa_{3}$ & $\kappa_{4}$ & $\kappa_{5}$ & $\kappa_{6}$ & $\kappa_{7}$ & $\kappa_{8}$ & $\kappa_{9}$ & $\kappa_{10}$ & $\kappa_{11}$ & $\kappa_{12}$ & Time\\ \hline
Truth&1&0&1&0&0&1&0&0&0.5&1&1&1&Seconds \\
$N=1$ & 1.10 & 0.00 & 0.75 & 1.89 & 0.00 & 1.09 & 0.00 & 0.01 & 0.93 & 0.70 & 1.31 & 2.84&46.97  \\
$N=2$ & 1.04 & 0.00 & 0.85 & 0.95 & 0.00 & 1.04 & 0.00 & 0.00 & 0.73 & 0.82 & 1.13 & 2.13&53.17  \\
$N=3$ &1.02&0.00&0.90&0.63&0.00&1.02&0.00&0.00&0.67&0.88&1.08&1.82&52.55 \\
$N=5$ &1.07&0.00&0.93&0.39&0.00&1.04&0.00&0.07&0.60&1.64&1.15&1.49&53.44 \\
\end{tabular}
\end{table}

\begin{table}\caption{Posterior mean of $\kappa_k$ from 50,000 MCMC samples for $N=1,2,3,5$ trajectories using LNA.}
\centering
\begin{tabular}
{c  c c c c c c c c c c c c c c} \hline 
$\kappa$ & $\kappa_1$ & $\kappa_{2}$ & $\kappa_{3}$ & $\kappa_{4}$ & $\kappa_{5}$ & $\kappa_{6}$ & $\kappa_{7}$ & $\kappa_{8}$ & $\kappa_{9}$ & $\kappa_{10}$ & $\kappa_{11}$ & $\kappa_{12}$ & Time\\ \hline
Truth&1&0&1&0&0&1&0&0&0.5&1&1&1&Seconds \\
$N=1$ & 0.95 & 0.00 & 0.79 & 1.52 & 0.00 & 0.40 & 0.00 & 1.11& 0.02 & 1.85 & 1.02 & 0.01 & 998356.82  \\
$N=2$ & 1.06 & 0.00 & 0.84 & 1.33 & 0.00 & 0.12 & 0.00 & 0.00 & 0.25 & 0.82 & 1.22 &  0.01 & 1002223.98\\
$N=3$ &1.01&0.00&0.93&1.30&0.00&0.10&0.00&0.00&0.17&0.96&1.02&0.00&998999.92 \\
$N=5$ &1.05&0.00&0.90&1.19&0.00&0.54&0.00&0.00&0.29&0.91&1.14&0.01&998811.77 \\
\end{tabular}
\label{table:posteriorLNA}
\end{table}

\begin{table}\caption{Posterior mean of $\kappa_k$ from 10,000 MCMC samples for $N=1,2,3,5$ trajectories using particle filtering.}
\centering
\begin{tabular}
{c c c c c c c c c c c c c c c} \hline 
$\kappa$ & $\kappa_1$ & $\kappa_{2}$ & $\kappa_{3}$ & $\kappa_{4}$ & $\kappa_{5}$ & $\kappa_{6}$ & $\kappa_{7}$ & $\kappa_{8}$ & $\kappa_{9}$ & $\kappa_{10}$ & $\kappa_{11}$ & $\kappa_{12}$ & Time\\ \hline
Truth&1&0&1&0&0&1&0&0&0.5&1&1&1&Seconds \\
$N=1$ & 0.62 & 0.01 & 0.83 & 0.71 & 0.01 & 6.05 & 0.01 & 0.15& 3.01 & 0.89 & 0.43 & 1.08 & 1518414.34  \\
$N=2$ & 1.02 & 0.01 & 0.94 & 0.05 & 0.02 & 0.24 & 0.02 & 0.05 & 7.23 & 0.91 & 1.04 & 1.56 & 1546878.77\\
$N=3$ &0.95&0.00&0.99&0.31&0.00&0.29&0.01&0.08&3.54&1.07&0.83&0.93&1546693.39 \\
$N=5$ &0.63&0.00&0.90&0.07&0.01&7.30&0.01&0.05&3.30&0.87&0.40&1.13&1517892.47 \\
\end{tabular}
\label{table:posteriorPF}
\end{table}

The results in Tables \ref{table:posteriorlinksLSE} and \ref{table:posteriorlinksMEF} indicate that both synthetic likelihood methods have performed reasonably well in the example, with the MEF-based synthetic likelihood analysis performing the best. The improvement of the MEF-based synthetic likelihood over the LSE-based one is likely due to a generally better efficiency of MEF over that of LSE. There is also apparently some degree of bias in the LSE estimate of $\kappa_6$, even with increasing $N$. MEF-based likelihood analysis assigned to the true reactions non-zero posterior medians for all sample sizes, while assigning zero posterior medians to nearly all the false reactions for two or more trajectories, with the exception of $\kappa_4$. MEF-based analysis performed better than LSE-based analysis for all reactions and for each number of trajectories. Both LSE and MEF-based analysis gave high posterior probability to the protein $P_1$ being a transcription factor for $R_1$(reaction 3) as well as $P_2$ acting as a suppressor of transcription of $R_1$ (reaction 9), which was indeed the case in the  simulation. Since in actual experiments such reactions often indicate drug targets, the fact that our method was able to correctly identify them is of practical relevance. The results for both the LNA and particle filtering, located in Tables \ref{table:posteriorLNA} and \ref{table:posteriorPF}, show that their corresponding run times are already unacceptable in this moderate sized system. Indeed, we found that obtaining just 10,000 samples from the posterior distribution with the particle marginal Metropolis-Hastings implementation required more than two weeks of CPU time. Collecting 50,000 samples from the posterior distribution using the LNA, with adaptive MCMC for optimal acceptance rates of 0.234, took approximately 12 days of CPU time. The bottleneck of computation for the particle filtering is that unbiased likelihood estimates require sampling many trajectories, in our case 100, for each likelihood evaluation. For the LNA based analysis, each likelihood evaluation requires solving a system of non-autonomous ODEs. The particular examples in Tables~\ref{table:posteriorLNA} and \ref{table:posteriorPF} illustrate the generally accepted view that, at least until now, most of the current methods that rely on detailed system modeling do not scale well. Not only did the methods perform poorly in terms of long run times, they also produced estimates that appear biased away from the true values of certain parameters, even as the number of trajectories increases. For the particle filtering routine, estimates for $\kappa_6$ and $\kappa_9$ have a high degree of bias, whereas the LNA-based analysis appears to incorrectly infer $\kappa_4$ as true (non zero) and $\kappa_{12}$ as false (zero), even with  all  5 trajectories data provided. The synthetic likelihood methods perform better, in terms of inference and computation, by projecting a high dimensional noisy trajectory into a lower dimensional statistic that captures the important dynamical information with less noise. It appears that for the full Metropolis type methods, the variable selection priors, like the point mass ones used in our synthetic likelihood methods, would likely pose even greater computational challenges than the continuous priors applied in our examples here since the routines for tuning the necessary proposals are not straightforward for the LNA, the particle filter, or any Metropolis type sampling with intractable likelihoods.   

\begin{figure}[htbp]
   \centering
   \includegraphics[width=6in,height=5in]{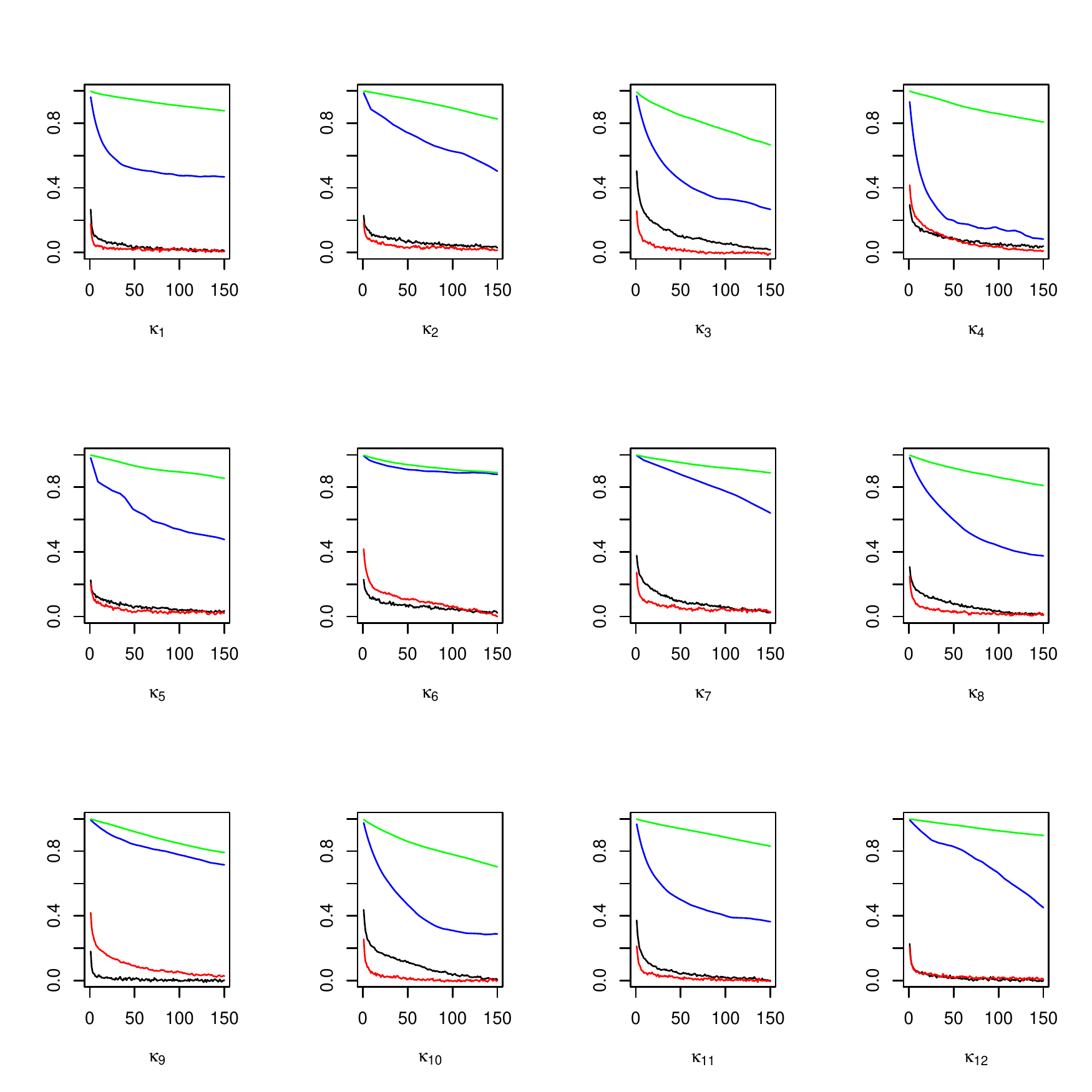} 
   \caption{{Auto-correlation plots of the output from MEF-based synthetic likelihood (black), LSE-based synthetic likelihood (red), LNA (blue), and particle marginal Metropolis-Hastings (green) for $N=1$.}}
   \label{fig:autocorr}
\end{figure}

The plots in Figure \ref{fig:autocorr} indicate that the chains resulting from the LNA and particle filtering have a much higher degree of auto-correlation as compared to the synthetic likelihood methods. Thus, in addition to the increased computational time of each MCMC sample, more samples are required in order for the LNA and particle filter to sufficiently explore parameter space in our current setting. We conclude by these plots that although adaptive MCMC was used, at least for the LNA likelihood approximation, the resulting chains exhibit poor mixing. Although theoretically Metropolis-Hastings type samplers can be tuned to produce optimal acceptance properties, the results here highlight the general difficulty of tuning in the presence of complicated or intractable likelihoods. 


\section{Data Examples}
\subsection{RNA-Seq  Data}
We now compare the performance of our synthetic method to that of the algebraic statistical model (ASM). The method was introduced in \cite{craciun2009algebraic}  to learn biochemical network topology from the empirical patterns of the reaction stoichiometries ($\nu_k ^\prime-\nu_k $). To facilitate the comparison,  we re-analyze a dataset introduced in  \cite{Linder:2013aa}  and  consisting of the longitudinal    RNA-seq measurements from the  retinal tissue in  the zebrafish ({\it Danio rerio}). The study was performed to probe the regenerative properties of the zebrafish retina after  it sustained cell-specific damage. One interest of the study was in analyzing a particular sub-system, consisting of the following species: heat shock protein transcription factor (Hsp70), signal transducer and activator of transcription 3 (Stat3), and the suppressor of cytokine signaling 3 (Socs3). For more details on the experiment, see \cite{Linder:2013aa}. The network of interest has the form
\begin{align}\label{eq:ZebraNet}
\emptyset& \xrightarrow{\kappa_1} Stat3 \hspace{5mm}Stat3 \xrightarrow{\kappa_4} 2Stat3 \nonumber \\
\emptyset& \xrightarrow{\kappa_2} Socs3b \hspace{5mm} Stat3\xrightarrow{\kappa_5} Socs3b \nonumber \\
\emptyset& \xrightarrow{\kappa_3} hsp70 \hspace{5mm} Stat3 \xrightarrow{\kappa_6} hsp70 \nonumber \\
S&tat3+Socs3b \xrightarrow{\kappa_7} Socs3b \nonumber \\
S&tat3 \xrightarrow{\kappa_8} \emptyset \nonumber \\
S&ocs3b\xrightarrow{\kappa_9} \emptyset \nonumber \\
h&sp70 \xrightarrow{\kappa_{10}} \emptyset.  
\end{align}
In the above network, we are especially interested in  the possible activation of the heat shock response via Stat3. The detailed  analysis via ASM based on all 8 trajectories of the experiment  was presented  in \cite{Linder:2013aa}, where the topology of the conic (i.e., single-source)  sub-network in Figure \ref{fig:Stat3Cone} was learned. We may thus compare the proposed synthetic method's results  based on LSE with the results based on ASM for the same dataset. As previously mentioned, the proposed method also allows for computation of posterior probabilities via empricial estimates of the posterior weights; i.e., $P(\kappa_k \ne 0|D)$ since the dominating measure is $(\delta_0+\mu)$ and not merely $\mu$. To this end, we compute and report the posterior probabilities by simulating 50,000 MCMC samples from the model in \eqref{eq:Model} under the same hyperparameter assumptions, as in the previous section. 

\begin{figure}
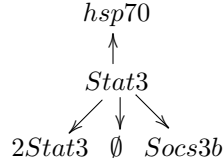
\centering \caption{{Stat3 Conic Network. Note that the only source for 4 different products is Stat3.}} \label{fig:Stat3Cone}
\begin{equation*}
\xy ;<.7pc, 0pc>:
\POS(0,0)*+{Stat3}
\ar@<.3 ex> @{->}        +(-3,-3)*+{2Stat3}
\ar@<.3 ex> @{->}        +(3,-3)*+{Socs3b}
\ar@<.3 ex> @{->}        +(0,3)*+{hsp70}
\ar@<.3 ex> @{->}        +(0,-3)*+{\emptyset}
\endxy
\end{equation*}
\end{figure}

The results in Table \ref{tab:zebraTable} indicate that reactions 4, 5, and 6 are likely true, while reaction 8 may only occur on a much longer time scale. Since both methods produce similar network topologies, we mention some advantages of the proposed model over ASM. While the appeal of the ASM is that it exploits the geometry of the stoichiometric matrix, the proposed method based on synthetic likelihood does so as well, in a sense,  through the entries of $Q$ matrix. A practical limitation of the ASM is that it enforces the cone-wise assumption that exactly $s$ reactions are true, which will typically not be the case. Similarly to the synthetic method,  ASM also tacitly assumes a large volume setting, $(n \rightarrow \infty)$, however, unlike for the synthetic method, the ASM inference problem is only asymptotically $(N \rightarrow \infty)$ well-posed and only on the  set of posterior probabilities  $\omega_k^\ast\in \{ 0,1\}$ for $k=1,\ldots,r$. Thus ASM is strictly a topology learning routine, and not capable of  kinetic parameter estimation. In contrast, (even though we did not present the results in this section for brevity) the parameter estimation may be easily carried out with the proposed synthetic approach by analyzing the posterior distribution and selecting the point estimates of $\kappa_k$, as was done in the previous section. For illustration, we  present the bivariate contour plots of the posterior distribution for the reaction rates from the sub-system of interest in Figure \ref{fig:contour}. Our main observation is that the empirical plots indeed agree with our theoretical  results  on the  unimodality of the posterior distribution. 

\begin{table}\caption{\label{tab:zebraTable}Synthetic likelihood and ASM reaction probabilities}
\centering 
\begin{tabular}{r r r r}
\hline
Reaction Source&Reaction Output&Synthetic Likelihood&ASM \\
\hline 
$Stat3$&$2Stat3$&$0.94$&$1$ \\
$Stat3$&$Socs3b$&$0.97$&$1$ \\
$Stat3$&$hsp70$&$0.99$&$1$ \\ 
$Stat3$&$\emptyset$&$0.34$&$0$ \\
\hline
\end{tabular}
\end{table}
\begin{figure}[htbp]
   \centering
   \includegraphics[width=6in,height=5in]{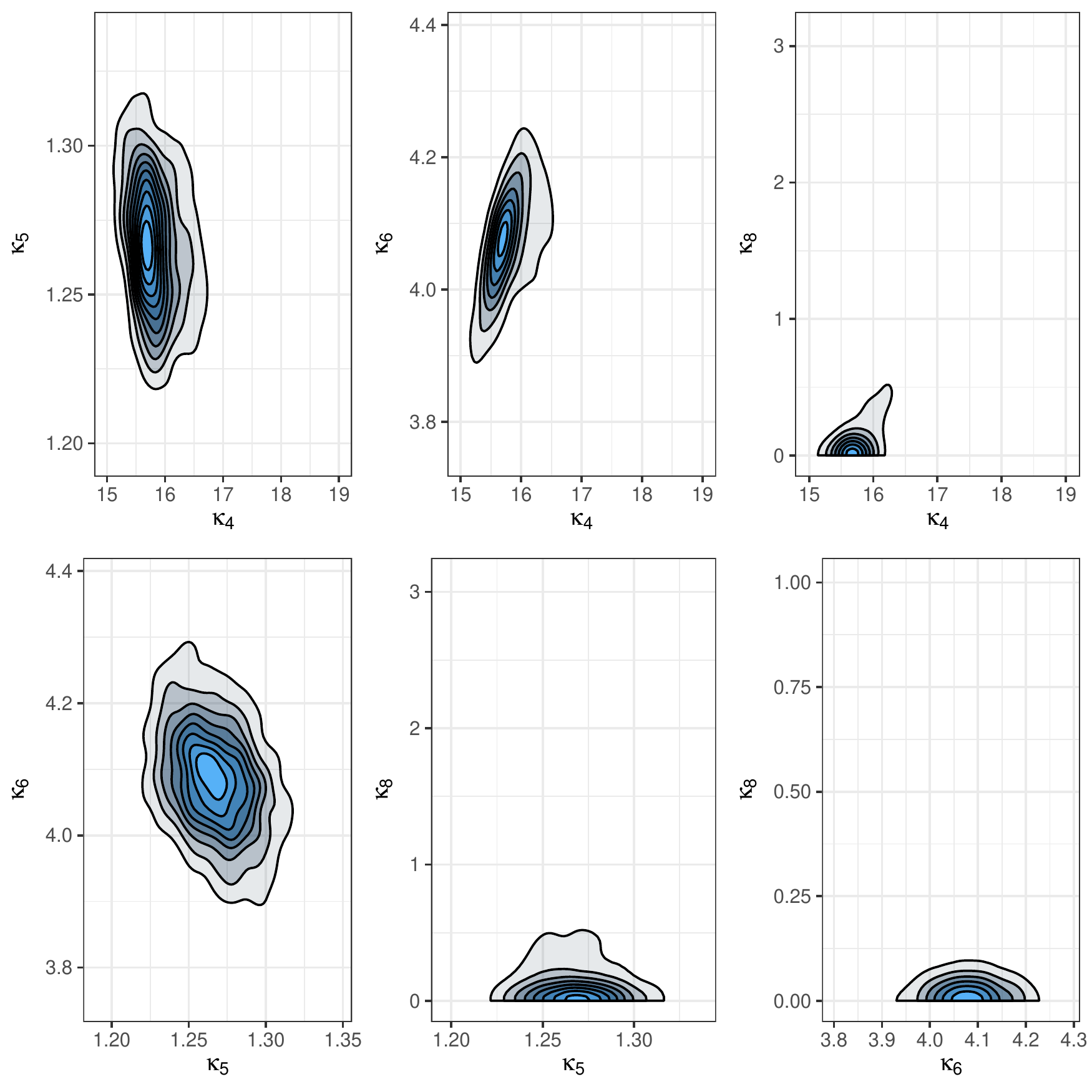} 
   \caption{{Bivariate contour plots of posterior distribution from synthetic likelihood. Empirical density estimates increase from light gray to light blue.}}
   \label{fig:contour}
\end{figure}
\subsection{The Plague at Eyam}
 In the seventeeth century,   following the Great Plague of London the village of Eyam, Derbyshire, England, experienced an outbreak of plague, caused by the bacterium {\it Yersinia pestis}. In this section we analyze data from this outbreak that occurred at Eyam in 1665-1666.  See  also \cite{whittles2016epidemiological, dean2018human}
  for further discussion of the dataset and the relevant historic context.  


Several features about the plague outbreak at Eyam make its study somewhat unique. The first of these features is  that the village rector, a Reverand William Mompesson,  reportededly  convinced the villagers to self-quarantine. Although recent evidence suggests that a few of the wealthy residents may have fled (it is  reported that Mompesson sent his children away before the quarantine), we may effectively treat the plague at Eyam as an  outbreak in  closed population. The names and burial dates of plague victims were recorded by Mompesson. Further, the parish records combined with the hearth tax record for Eyam in 1664 provide detailed information on the villagers, such as their sex, approximate date of birth, date of burial, and household information. This curated version of the Eyam parish register has lead to a newly revised estimate of a total village population of around 700, from an initially reported 350.   
         
As the account goes, a tailor at Eyam received a shipment of cloth from London that was carrying plague infected rat fleas, and the first infected victim is believed to have come in contact with this cloth. As the infected flea's digestive system becomes blocked by the bacterium, the flea vomits into the bite wound, thus transmitting {\it Y. pestis}. This transmission mechanism is now medically confirmed as giving rise to the bubonic form of  plague. On September 7th 1665 the first burial due to plague was for a George Viccars. Over the next nine months, a somewhat constrained outbreak occurred in the Eyam villagers, of which 77 deaths have been attributed to plague. Around mid-May 1666 a second wave of the outbreak began to spread, and during the ensuing months from June 1666 through October 1666 had decimated the village, killing some 257 villagers.         

While the rodent-to-human transmission route via the rat flea is understood to be critical for the initial outbreak dynamics, this particular mode of transmission alone does not fully explain the observed rapidity of the various plague outbreaks throughout Europe. This was also argued,  at least qualitatively,  based on the empirical differences in the early outbreak dynamics compared to the latter months at Eyam, \cite{raggett1982stochastic}. An apparent lack of recorded rat falls (large scale rat deaths) during these outbreaks provides further evidence that additional transmission mechanisms were also critical for disease spread. Rat falls are generally considered necessary to cause sufficient flea jumpings from rat corpses onto humans. While human-to-human contact has been recognized as a component of the plague transmission process, through plague pneumonia and more recently via ectoparasites, such as lice and the human flea, recent analyses suggest that this transmission route may be far more important than previously recognized \citep{whittles2016epidemiological, dean2018human}.

We set out to analyze the Eyam plague data that was reported in \cite{raggett1982stochastic}, which we have augmented to account for the more recent information on the total population reported in \cite{whittles2016epidemiological}.  
      
\tikzset{int/.style={draw, line width = .75mm, minimum size=5em}}

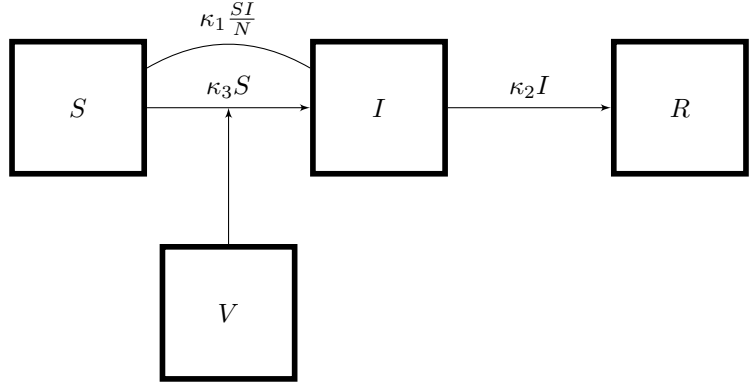
\begin{figure} 
\centering

\begin{tikzpicture}[node distance=4cm,auto,>=latex']
\node [int] (a) {$S$};
\node [int] (b) [right of=a, node distance=4cm]{$I$};
\node [int] (c) [right of=b, node distance=4cm]{$R$};

\draw[->] (a) -- node[name=u]{$\kappa_3 S$} (b);
\draw[->] (b) -- node[name=v] {$\kappa_2 I$} (c);
\node [int] (r) [below of=u, node distance=3cm]{$V$};
\path (a) edge[bend left] node[anchor=north, above]{$\kappa_1 \frac{SI}{N}$} (b);
\draw[->] (r) -- node[name=j] {$$} (u);


\end{tikzpicture}
\caption{Susceptible, Infected, Recovered (SIR) compartmental model for the Eyam plague.}
\end{figure}
Figure 4 illustrates the compartmental SIR model that we consider for analysis of the Eyam data. The $S$, $I$, and $R$ compartments represents the number of susceptibles, infectious, and removed individuals, which we denote at time $t$ by $S(t)$, $I(t)$, and $R(t)$. While we have labeled the compartment $R$ removed as is standard in the typical SIR notation, the Eyam plague was almost universally fatal for infected. There were only three alleged recoveries, of which none were reported in the data sources, so that $R$ effectively represents deaths. We represent the rodent-flea $V$ compartments and its contribution to the infectious pressure on susceptibles as $\kappa_3S$. This is essentially an assumption that the infectious pressure from non-human interaction is proportional to the number of susceptibles. While this may not be a completely accurate description, the outbreak period that we analyze was during early to late summer, leading one to suspect that  the infectious pressure from the $V$ compartment may have been approximately constant. According to the SIR model,  susceptibles ($S$) make infectious contact and then transition to compartment $I$ at rate $\kappa_1SI$. Finally, infected individuals die and transition to compartment $R$ at rate $\kappa_2I$.     
 
The compartment specific prevalence estimates, reported in \cite{raggett1982stochastic}, were updated with the new population total and are displayed in Table \ref{table:Eyam}. We have focused attention on the second phase of the outbreak that occurred in the summer of 1666 to assess the evidence on  whether a particular transmission route was more important than another. Since the Eyam data analysis assumes a closed population one may readily recover the $R$ compartment at time $t$ by the formula $N=S(t)+I(t)+R(t)$, with $N=613$.
 \begin{table} 
\caption{{\label{table:Eyam} Plague data for Eyam 1666}}
\centering
\bgroup
\setlength{\tabcolsep}{4em}
\def\arraystretch{1.5}
\begin{tabular}
{c c c} 
Date 1666& $S(t)$ & $I(t)$  \\ \hline
May 18&612&1 \\
June 18 & 593 & 7   \\
July 19 & 540& 22   \\
August 19 &460&20 \\
September 19 &436&8 \\
October 20 &422&0 
\end{tabular}
\egroup
\end{table}    
The corresponding mass action ODE then for the above system has the form
\begin{align}\label{EyamODE}
\frac{dS}{dt} =& -(\kappa_1I+\kappa_3)S \nonumber \\
\frac{dI}{dt} =&-(\kappa_2-\kappa_1 S)I 
\end{align}
We use the plague data in Table \ref{table:Eyam} to compute the MEF statistic using the approach described above. The MEF statistic is computed by minimizing the weighted sum of squared distances between the observed trajectory and ODE solution of \eqref{EyamODE}, weighted by the asymptotic process covariance at each timepoint. The initial condition is given as the compartment specific prevalence estimates in May 18 1666. For the Eyam data we have the single outbreak trajectory; i.e., no replicates, and we thus use the asymptotic covariance estimate of the MEF statistic for the fixed covariance term to construct the synthetic likelihood. Also, $Q=I_{3 \times 3}$, indicating that the unknown rate parameters are directly related to the summary statistics through the identity matrix. We collected 50,000 MCMC samples via the synthetic likelihood method described above with a burn-in of 5,000. The corresponding posterior medians and 95\% credible intervals for the parameters are $\kappa_1= 5.30\hspace{1mm}(5.22, 5.38)$,  $\kappa_2= 4.22\hspace{1mm} (4.14, 4.30)$, and $\kappa_3= 0 \hspace{1mm}(0, 0.003)$.

We note that the results from our  analysis agree qualitatively with the results in \cite{whittles2016epidemiological, dean2018human} concerning  the role of human-to-human transmission of plague. While we have not made explicit assumptions about what the exact form of  the human-to-human transmission mechanism (i.e, plague pneumonia, ectoparasites or some other form) the data from the latter months of the outbreak at Eyam nonetheless suggest that in our simple modified SIR model human-to-human contact was important. This was already suggested early on from the historical accounts that the plague at Eyam could be transmitted from the cough of a plague victim, suggesting plague pneumonia transmission. Further, the posterior median for $\kappa_3$ of zero indicates the corresponding link in our modified SIR model, that accounts for infectious pressure from the rodent-flea route, could be negligible, at least during the latter part of the outbreak.   

There are several limitations of this analysis that should be noted here. The first is that we have restricted our attention to the latter months of the outbreak at Eyam, during which it was apparent that the dynamics had changed from those of the initial outbreak. By doing this we are potentially missing information about the nature of the initial dynamics, which may point to a different transmission route as being important early on. Indeed, results from \cite{whittles2016epidemiological} suggest that approximately 27\% of infections were caused by rodents and 73\% from human-to-human transmission by using the full outbreak data. This leads to another limitation, in that we have relied on the prevalence estimates reported in \cite{raggett1982stochastic}, updated with new population totals. 
These compartment prevalences were estimated from the historical and death records, so are likely subject to measurement error, which we have not accounted for. Further, we have not used data on household structure and composition, although this is part of planned future work. Finally, while the augmented SIR type model we have used is somewhat similar to the SEIR model used in \cite{whittles2016epidemiological}, it does not consider explicit plague pneumonia vs. ectoparasite driven human-to-human transmission separately, as was done in \cite{dean2018human}. Hence, our analysis only adds to the evidence that some form of human-to-human transmission, which we modeled with a generic SIR framework, was important but does not distinguish between particular forms of this transmission.            
\section{Conclusions} We have described a method which can be used to perform estimation of biochemical networks as often considered in the context of dynamic gene regulatory networks and stochastic epidemic models. 
It is well established that this is a notoriously difficult problem, due to the intractability of the likelihood under partially observed trajectories. The underlying theme in most of the popular approaches in this area is to use likelihood approximations to perform approximate inference, such as in \cite{golightly2005bayesian,girolami2008bayesian,komorowski2009bayesian,golightly2012,finkenstadt2013quantifying,fearnhead2014inference}. While our approach adopts this  theme, it is fundamentally different than the standard approximate and likelihood-free inferential techniques. The most important of these differences is that the data summary statistics used here (LSE or MEF estimates) have properties that are well understood and are directly related to the unknown kinetic parameters. These properties justify, via the asymptotic normality, a parametric form for the synthetic likelihood, in the spirit of \cite{Wood:2010aa}, that is linear in the parameters of interest. Hence, we have demonstrated that projections of the species' trajectories into sets of dynamically informative statistics allows for highly efficient posterior sampling and a procedure that should scale well in large systems.

\section*{ACKNOWLEDGEMENTS}
The first author would like to thank the Mathematical Biosciences Institute (MBI) at Ohio State University, for partially supporting this research through an Early Career Award. MBI receives its funding through the National Science Foundation grant DMS 1440386.
\appendix

\section{Proofs of Propositions}
\begin{prop}
The statistics, $\hat{\bbeta}_j$, of Equations 4 and 5 computed from the $j^{th}$ trajectory of data arising from the DDMJP in Equation 1 are asymptotically sufficient for $\bbeta=Q\bk$ as $n \rightarrow \infty$.
\end{prop}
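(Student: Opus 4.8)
The plan is to obtain the result as an \emph{asymptotic Neyman--Fisher factorization} in the sense of \cite{le1956asymptotic}, built from two ingredients: the asymptotic normality of $\hat{\bbeta}_j$ obtained by linearizing the estimating equation, and the Gaussian limit of the full trajectory likelihood. Throughout I fix the data-generating $\bk$, so that $\bbeta=Q\bk$ is the true coefficient vector, and I take $\hat{\bbeta}_j$ to be a solution of the optimal martingale estimating equation \eqref{eq:MEF}; the least-squares version \eqref{eq:EE} is handled identically with the variance weight replaced by the identity.

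First I would linearize. Starting from $G_j(\hat{\bbeta}_j)=0$, I insert and remove the true mean $c^{\bbeta}(t_{ij})$ in each residual, splitting the equation into a term driven by the centered data $C_n(t_{ij})-c^{\bbeta}(t_{ij})$ and a term measuring $c^{\hat{\bbeta}_j}-c^{\bbeta}$. A first-order Taylor expansion of the latter about $\bbeta=Q\bk$, combined with the consistency of $\hat{\bbeta}_j$ guaranteed under the regularity conditions of \cite{Remp12}, renders the remainder negligible and yields, after scaling by $\sqrt{n}$,
\[
\sqrt{n}\,(\hat{\bbeta}_j - Q\bk) = B_{\bbeta}^{-1}\sum_{t_{ij}} \partial c^{\bbeta}(t_{ij})\,\mathrm{Var}_{\bbeta}\!\big(C_n(t_{ij})\big)^{-1}\,\sqrt{n}\,\big(C_n(t_{ij}) - c^{\bbeta}(t_{ij})\big) + o_P(1),
\]
where $B_{\bbeta}=\sum_{t_{ij}} \partial c^{\bbeta}(t_{ij})\,\mathrm{Var}_{\bbeta}(C_n(t_{ij}))^{-1}[\partial c^{\bbeta}(t_{ij})]^\top$. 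The functional central limit theorem for the DDMJP of \eqref{eq:2} (see \cite{Remp12,Linder:2013aa,Linder:2015aa}) controls the scaled fluctuations $\sqrt{n}\,(C_n(t_{ij})-c^{\bbeta}(t_{ij}))$, so the right-hand side is a bounded linear functional of an asymptotically Gaussian field and $\sqrt{n}\,(\hat{\bbeta}_j-Q\bk)\Rightarrow\mathcal{N}(0,\Sigma)$ for the limiting covariance $\Sigma$.

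Next I would pass from normality to sufficiency. Applied at the level of the full stacked concentration vector, the same CLT gives $\sqrt{n}\,(\bC_j-\bc^{\kappa})\Rightarrow\mathcal{N}(0,\Sigma_{\kappa})$, so that in the large-volume limit the trajectory likelihood $L(\bk\mid D_j)$ is the Gaussian likelihood centered at $\bc^{\kappa}$. The law of large numbers together with the consistency of $\hat{\bbeta}_j$ gives $\bC_j\approx\bc^{\hat{\bbeta}_j}$, whence the data enter this limiting Gaussian only through $\hat{\bbeta}_j$, up to a factor depending on $\bk$ alone. This is exactly the asymptotic factorization $L(\bk\mid D_j)\approx g(\hat{\bbeta}_j,\bk)$, and therefore $\hat{\bbeta}_j$ is asymptotically sufficient for $\bbeta=Q\bk$.

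The hard part will be making this last step rigorous rather than heuristic. Turning $\bC_j\approx\bc^{\hat{\bbeta}_j}$ into a genuine Le Cam factorization requires controlling the replacement error \emph{uniformly} over shrinking local neighborhoods of the truth and verifying that the induced local likelihood ratios converge, i.e.\ establishing a local asymptotic normality property, so that the experiment generated by the full trajectory and that generated by $\hat{\bbeta}_j$ alone are asymptotically equivalent. Identifiability of $\bbeta$, which the reparameterization $\bbeta=Q\bk$ secures for the systems considered here, is needed for local injectivity of the centering map $\bk\mapsto\bc^{\kappa}$; without it the factorization, while valid, would be uninformative.
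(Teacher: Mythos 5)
Your proposal takes essentially the same route as the paper: the linearization of the estimating equation about $\bbeta=Q\bk$, with the Taylor remainder killed by consistency under the conditions of \cite{Remp12}, yielding $\sqrt{n}(\hat{\bbeta}_j-Q\bk)\Rightarrow\mathcal{N}(0,\Sigma)$, is exactly the paper's appendix argument, and your passage from normality to sufficiency via $\bC_j\approx\bc^{\hat{\bbeta}_j}$ and an asymptotic Neyman--Fisher factorization reproduces the paper's own main-text discussion. Your closing caveat that this last step is heuristic absent uniform local control also matches the paper, which concedes that asymptotic sufficiency in the sense of \cite{le1956asymptotic} is only ``hinted at'' and requires case-by-case analysis to establish formally.
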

\begin{proof}
Let $D_j$ be the $j^{th}$ trajectory arising from the DDMJP in Equation 1. Assume that $\bk$, and hence $\bbeta=Q\bk$, is the true parameter and let $\hat{\bbeta}$ be a solution to $G_j(\hat{\bbeta})=0$ in Equation 5, which necessarily satisfies 
\begin{align*}
0=&\sum_{t_{ij}} \partial c^{\hat{\bbeta}}(t_{ij})(Var_{\hat{\bbeta}}(C_n(t_{ij})))^{-1}(C_n(t_{ij})-c^{\hat{\bbeta}}(t_{ij})) \nonumber \\
=&\sum_{t_{ij}} \partial c^{\hat{\bbeta}}(t_{ij})(Var_{\hat{\bbeta}}(C_n(t_{ij})))^{-1}(C_n(t_{ij})-c^{\bbeta}(t_{ij})+c^{\bbeta}(t_{ij})-c^{\hat{\bbeta}}(t_{ij})) \nonumber \\
=&\sum_{t_{ij}} \partial c^{\hat{\bbeta}}(t_{ij})(Var_{\hat{\bbeta}}(C_n(t_{ij})))^{-1}(C_n(t_{ij})-c^{\bbeta}(t_{ij})) \nonumber \\ &+\sum_{t_{ij}} \partial c^{\hat{\bbeta}}(t_{ij})(Var_{\hat{\bbeta}}(C_n(t_{ij})))^{-1}(c^{\bbeta}(t_{ij})-c^{\hat{\bbeta}}(t_{ij}))
\end{align*}
implying that
\begin{equation}\label{eq:proof}
\sum_{t_{ij}} \partial c^{\hat{\bbeta}}(t_{ij})(Var_{\hat{\bbeta}}(C_n(t_{ij})))^{-1}(c^{\hat{\bbeta}}(t_{ij})-c^{\bbeta}(t_{ij}))=\sum_{t_{ij}} \partial c^{\hat{\bbeta}}(t_{ij})(Var_{\hat{\bbeta}}(C_n(t_{ij})))^{-1}(C_n(t_{ij})-c^{\bbeta}(t_{ij}))
\end{equation}
Taylor's expansion of the left hand side of Equation \eqref{eq:proof} about $\bbeta=Q\bk$ and scaling by $\sqrt{n}$ gives
\begin{equation}
\sqrt{n}(\hat{\bbeta}-Q\bk)= (B_{\hat{\bbeta}})^{-1} \sum_{t_{ij}}\partial c^{\hat{\bbeta}}(t_{ij})(Var_{\hat{\bbeta}}(C_n(t_{ij})))^{-1}\sqrt{n}(C_{n}(t_{ij})-c^{\bbeta}(t_{ij}))+o_P(1)
\end{equation}

where $ B_{\hat{\bbeta}}=\sum_{i}\partial c^{\hat{\bbeta}}(t_{i})(Var_{\hat{\bbeta}}(C_n(t_{ij})))^{-1}[\partial c^{\hat{\bbeta}}(t_i)]^\top$ and the higher order terms in the expansion vanish since $\hat{\bbeta}$ is consistent for $\bbeta$, under the regularity conditions in \cite{Remp12}. Consistency and the asymptotic normality of $\sqrt{n}(C_{n}(t_{ij})-c^{\bbeta}(t_{ij}))$, imply that $\sqrt{n}(\hat{\bbeta}-Q\bk) \Rightarrow \mathcal{N}(0,\Sigma)$, where $\Sigma$ is the limiting covariance, see \cite{Linder:2013aa,Linder:2015aa}.  
\end{proof}
\begin{prop}
The point mass mixture prior, $\pi_1(\bk)$, in Equation 9 with \\ $f(\kk)=\lambda_k \exp\{- \lambda_k \kappa_k \} \mathbb{I}_{\mathbb{R}^+}(\kappa_k)$ and $\lambda_k=\frac{1-\omega_k}{\omega_k}$ is logarithmically concave in $\bk$, and hence is unimodal.
\end{prop}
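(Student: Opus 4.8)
The plan is to reduce the claim to a one-dimensional statement, since $\pi_1(\bk) = \prod_k \pi_1(\kk)$ is a product of functions each depending on a single coordinate. Because $\log \pi_1(\bk) = \sum_k \log \pi_1(\kk)$ is a sum of functions of distinct coordinates, joint log-concavity on $\mathbb{R}^r$ follows immediately once each factor $\pi_1(\kk)$ is shown to be log-concave on $[0,\infty)$. So everything comes down to analyzing a single mixture factor $\pi_1(\kk) = (1-\omega_k)\mathbb{I}_0(\kk) + \omega_k \lambda_k e^{-\lambda_k \kk}\mathbb{I}_{\mathbb{R}^+}(\kk)$.

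The key observation I would exploit is that the prescribed choice $\lambda_k = (1-\omega_k)/\omega_k$ forces $\omega_k \lambda_k = 1-\omega_k$. Consequently the continuous branch $\omega_k \lambda_k e^{-\lambda_k \kk}$ evaluated at $\kk = 0$ equals $\omega_k \lambda_k = 1-\omega_k$, which is exactly the mass the prior places at the atom. In other words, viewed as a function on $[0,\infty)$ with respect to the dominating measure $\delta_0 + \mu$, the factor $\pi_1(\kk)$ is continuous at the origin and coincides with the single expression $\omega_k \lambda_k e^{-\lambda_k \kk}$ for every $\kk \ge 0$. This expression is log-linear, hence trivially log-concave, which resolves the single-coordinate claim in one stroke.

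To make the argument fully rigorous I would verify the defining inequality $\pi_1(\alpha x + (1-\alpha)y) \ge \pi_1(x)^\alpha \pi_1(y)^{1-\alpha}$ for $\alpha \in (0,1)$ directly, splitting into the three cases that arise from the atom at zero: both $x,y > 0$, both equal to zero, and exactly one equal to zero. The first two are immediate, since the exponential branch is log-linear and at the atom $\pi_1(0)^\alpha \pi_1(0)^{1-\alpha} = \pi_1(0)$. The case I expect to be the crux is the mixed one, say $y = 0 < x$: here I must compare $\pi_1(\alpha x) = \omega_k \lambda_k e^{-\lambda_k \alpha x}$ against $(\omega_k \lambda_k e^{-\lambda_k x})^\alpha (1-\omega_k)^{1-\alpha}$, and it is precisely the identity $1-\omega_k = \omega_k \lambda_k$ that makes the right-hand side collapse to $\omega_k \lambda_k e^{-\lambda_k \alpha x}$, yielding equality. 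Without the prescribed relation between $\lambda_k$ and $\omega_k$ the density would jump at the origin and this inequality could fail, so this mixed case is where the hypothesis genuinely does its work.

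Finally I would assemble the pieces: each factor is log-concave on $[0,\infty)$, products of coordinate-wise log-concave functions are log-concave on the product domain, and a log-concave density is unimodal. This last implication is standard and requires no computation. The only real subtlety in the whole argument is the boundary matching at the atom, which the choice $\lambda_k = (1-\omega_k)/\omega_k$ is engineered precisely to handle.
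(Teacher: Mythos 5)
Your proof is correct and takes essentially the same route as the paper's: a component-wise reduction followed by the three-case verification of the log-concavity inequality (both positive, both zero, one zero), with the identity $\omega_k\lambda_k = 1-\omega_k$ doing the work in the mixed case exactly as in the paper. Your opening observation that this identity makes $\pi_1(\kappa_k)$ coincide with the single log-linear expression $\omega_k\lambda_k e^{-\lambda_k\kappa_k}$ on all of $[0,\infty)$ is a cleaner packaging of the same insight, and it matches the remark the authors themselves make after the proposition, namely that this choice of $\lambda_k$ renders the prior continuous and proportional to $\prod_k f(\kappa_k|\lambda_k)$.
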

\begin{proof}
We prove this result component-wise and the result for the full vector $\bk$ follows. The exponential density, $f(\kk)=\lambda_k \exp\{-\lambda_k \kk \}$ belongs to the class of log-concave densities; i.e., for any $x, y \in \mathbb{R}^+$ we have $f(\alpha x + (1-\alpha)y) \ge f(x)^{\alpha}f(y)^{1-\alpha}$ for $\alpha \in (0,1)$. Thus, if $x$ and $y$ are both positive we have 
\begin{align*}
\pi_1(\alpha x + (1-\alpha)y)=\omega f(\alpha x + (1-\alpha)y) & \ge \omega f(x)^{\alpha}f(y)^{1-\alpha} \\ & =(\omega f(x))^{\alpha} (\omega f(y))^{1-\alpha} \\ &=
\pi_1(x)^{\alpha} \pi_1(y)^{1-\alpha}. 
\end{align*}
When both $x$ and $y$ are zero we have $\pi(\alpha x + (1-\alpha)y)=\pi(0)=1-\omega = (1-\omega)^{\alpha}(1-\omega)^{1-\alpha}=\pi(0)^{\alpha}\pi(0)^{1-\alpha}$. When only one is zero, say $y$,  then 
\begin{align*}
\pi(\alpha x + (1-\alpha)y)=\omega f(\alpha x + (1-\alpha)y) & \ge \omega f(x)^{\alpha}f(0)^{1-\alpha}\\ & =(\omega f(x))^{\alpha} (\omega f(0))^{1-\alpha} \\ &=
\pi(x)^{\alpha} (\omega \frac{1-\omega}{\omega})^{1-\alpha} \\ &= \pi(x)^{\alpha} \pi(0)^{1-\alpha}.
\end{align*}
Thus, the prior is logarithmically concave in $\bk$ and hence is also unimodal.
\end{proof}
\begin{prop}
If $\lambda_k=\frac{1-\omega_k}{\omega_k}$ and $v \ge N+p+1$, the marginal synthetic posterior distribution from the synthetic likelihood model, $\pi(\bk|D)$, is unimodal.
\end{prop}
\begin{proof}
The synthetic posterior is proportional to 
\begin{align}\label{eq:posterior}
\pi(\bk,\Sigma|D) &\propto SL(\bk,\Sigma|D)\pi_1(\bk|\bl)\pi_2(\Sigma|\Psi) \nonumber \\ & \propto |\Sigma|^{(v-N-d-1)/2} \exp\big\{-\frac{1}{2}\mathrm{tr}(\Psi^{-1} \Sigma) \big\} \exp\big\{-\frac{n}{2}\sum_j (Q \bk-\hat{\beta}_j)^\top \Sigma^{-1}(Q \bk-\hat{\beta}_j)\big\} \nonumber \\  & \times\pi_1(\bk|\bl).
\end{align}
The logarithm of the first term in the product, $(v-N-d-1)/2 log |\Sigma|$, is concave when $v \ge N+d+1$, since $log |\Sigma|$ is concave. Logarithmic concavity of the second term follows from convexity of $tr(\Sigma)$. Consider an individual factor from the third term, $\exp\{-\frac{n}{2}(Q \bk-\hat{\beta}_j)^\top \Sigma^{-1}(Q \bk-\hat{\beta}_j) \} $.
The function $(Q \bk-\hat{\beta}_j)^\top \Sigma^{-1}(Q \bk-\hat{\beta}_j) $ has epigraph $\textrm{epi}\big((Q \bk-\hat{\beta}_j)^\top \Sigma^{-1}(Q \bk-\hat{\beta}_j)\big)=\big\{  \bigg(\big((Q \bk-\hat{\beta}_j), \Sigma \big),t \bigg)| \Sigma \succ 0, (Q \bk-\hat{\beta}_j)^\top \Sigma^{-1} (Q\bk-\hat{\beta}_j) \le t \big\}$. This is equivalent to \\ $\big\{  \bigg(\big((Q \bk-\hat{\beta}_j), \Sigma\big),t\bigg)| \Sigma \succ 0, \left(\begin{array}{cc} 
\Sigma & (Q \bk-\hat{\beta}_j)  \\ (Q \bk-\hat{\beta}_j)^\top & t  \end{array}\right) \succeq 0 \big\}$ via the Schur complement. Since the last condition is a linear matrix inequality in $((Q \bk-\hat{\beta}_j),\Sigma,t)$, $\mathrm{epi}((Q \bk-\hat{\beta}_j)^\top \Sigma^{-1}(Q \bk-\hat{\beta}_j))$ is convex, which implies that $(Q \bk-\hat{\beta}_j)^\top \Sigma^{-1}(Q \bk-\hat{\beta}_j)$ is convex in $((Q \bk-\hat{\beta}_j),\Sigma)$ \cite[Ch. 3]{boyd2004convex}. Thus, $h_j((Q \bk-\hat{\beta}_j),\Sigma)=\exp\{-\frac{n}{2}(Q \bk-\hat{\beta}_j)^\top \Sigma^{-1}(Q \bk-\hat{\beta}_j) \} $ is logarithmically concave in $((Q \bk-\hat{\beta}_j),\Sigma)$. That this implies logarithmic concavity in $(\bk,\Sigma)$ follows by considering $h_j^\ast(\bk,\Sigma):=h_j((Q \bk-\hat{\beta}_j),\Sigma)$, then
\begin{align*}
&\log h_j^\ast(\alpha \bk_1 + (1-\alpha)\bk_2,\alpha \Sigma_1+(1-\alpha) \Sigma_2) = \\
&\log h_j((Q (\alpha \bk_1+(1-\alpha)\bk_2)-\hat{\beta}_j),\alpha \Sigma_1 +(1-\alpha)\Sigma_2) = \\ & \log h_j((\alpha(Q\bk_1-\hat{\beta_j})+(1-\alpha)(Q\bk_2-\hat{\beta}_j),\alpha \Sigma_1 +(1-\alpha)\Sigma_2) \ge \\ &\alpha \log h_j(((Q\bk_1-\hat{\beta_j}),\Sigma_1)+(1-\alpha)\log h_j((Q\bk_2-\hat{\beta}_j),\Sigma_2)= \\ &\alpha \log h^\ast_j(\bk_1,\Sigma_1)+(1-\alpha)\log h^\ast_j(\bk_2,\Sigma_2)
\end{align*}
where the inequality follows from the logarithmic concavity of $h_j$ in $((Q \bk-\hat{\beta}_j),\Sigma)$. Thus, individual factors in the third term are logarithmically concave in $(\bk,\Sigma)$, and since logarithmic concavity is preserved under multiplication, the third term is also logarithmically concave in $(\bk,\Sigma)$. This combined with logarithmic concavity of $\pi_1$ from Proposition 2 shows that $\pi(\bk,\Sigma|D)$ is logarithmically concave in $(\bk,\Sigma)$. Since logarithmic concavity is preserved via marginalization, see \cite{pbekopa1971logarithmic} and \cite{leindler1972certain}, $\pi(\bk|D)=\int_{\mathbb{R}^{d(d+1)/2}}\pi(\bk,\Sigma|D) d \Sigma$ is logarithmically concave, and hence unimodal, in $\bk$.
\end{proof}
\section{Derivation of Metropolis-within-Gibbs sampler}
Here we provide some additional details on  the posterior calculations for Algorithm~1  in Section~3 by  deriving the required conditional distributions for $\kappa_k, \tau_k^2$, 
and $\Sigma$. 
\subsection*{Full conditionals for $\kappa_k$}
Defining $U:=nNQ^\top \Sigma^{-1} Q$ and $S:=nQ^\top  \Sigma^{-1} \sum_{j=1}^N\hat{\bbeta}_j$ the synthetic likelihood is 
\begin{align*}
SL(\bk,\Sigma|D)=&\prod_{j=1}^N (2 \pi)^{-d/2} |\Sigma/n|^{-1/2} \exp \{-1/2 (Q \bk-\hat{\beta}_j)^\top (\Sigma/n)^{-1} (Q \bk-\hat{\beta}_j) \} \\   \end{align*}
             
                 \begin{align*}   
                  =& (2 \pi/n)^{-Nd/2} [ \prod_{j=1}^N |\Sigma|^{-1/2} ] \exp\{-n/2 \sum_{j=1}^N \hat{\beta}_j^\top \Sigma^{-1} \hat{\beta}_j\}\exp \{-\bk^\top U \bk/2 + \bk^\top S \} \\
 =& (2 \pi/n)^{-Nd/2} [ \prod_{j=1}^N |\Sigma|^{-1/2} ] \exp\{-\sum_{m \ne k}\sum_{i \ne k}u_{im} \kappa_i \kappa_m+\sum_{i \ne k}s_i \kappa_i-n/2 \sum_{j=1}^N \hat{\beta}_j^\top \Sigma^{-1} \hat{\beta}_j\} \\ \times & \exp \{-u_{kk}\kappa_k^2/2 -\sum_{i \ne k}u_{ik}\kappa_i \kappa_k+ s_k \kappa_k \} \\
               =& C_k \exp \{-u_{kk}\kappa_k^2/2 +(s_k-\sum_{i \ne k}u_{ik}\kappa_i ) \kappa_k \}  
\end{align*}
where $C_k$ is the term depending on other parameters except $\kappa_k$. The first mixture (i.e., degenerate at 0) component's full conditional of $\kappa_k$ we have
\begin{equation*}
c_{1}=\int SL(\bk,\Sigma|D)\mathbb{I}_0(\kappa_k)d\kappa_k=C_k
\end{equation*}
For the second component
\begin{align}
c_2=&\int_{\mathbb{R}^+} SL(\bk|D)f(\kappa_k|\tau_k)d\kappa_k \nonumber \\
     =&C_k \int_{\mathbb{R}^+} \exp \{-u_{kk}\kappa_k^2/2 +(s_k-\sum_{i \ne k}u_{ik}\kappa_i ) \kappa_k \} \sqrt{\frac{2 }{\pi \tau_k^2}}\exp\{-  \kappa_k^2/2\tau_k^2 \}\mathbb{I}_{\mathbb{R}^+}(\kappa_k) d \kappa_k \nonumber \\
     =&C_k \sqrt{\frac{2 }{\pi \tau_k^2}} \int_{\mathbb{R}^+} \exp \{-(u_{kk}+1/\tau_k^2)\kappa_k^2/2 +(s_k-\sum_{i \ne k}u_{ik}\kappa_i ) \kappa_k \}\mathbb{I}_{\mathbb{R}^+}(\kappa_k) d \kappa_k \label{eq:FCtau} \\
     =&C_k \sqrt{\frac{2 }{\pi \tau_k^2}}\exp\{ \frac{(s_k-\sum_{i \ne k}u_{ik}\kappa_i )^2}{2(u_{kk}+1/\tau_k^2)} \} \int_{\mathbb{R}^+} \exp \{-(u_{kk}+1/\tau_k^2)(\kappa_k-\frac{(s_k-\sum_{i \ne k}u_{ik}\kappa_i)}{(u_{kk}+1/\tau_k^2)})^2/2 \} \mathbb{I}_{\mathbb{R}^+}(\kappa_k)d \kappa_k \label{eq:FCkappa} \\
     =&C_k \sqrt{\frac{2 }{\pi \tau_k^2}}\exp\{ \frac{(s_k-\sum_{i \ne k}u_{ik}\kappa_i )^2}{2(u_{kk}+1/\tau_k^2)} \} \sqrt{\frac{2 \pi}{(u_{kk}+1/\tau_k^2)}}(1-\Phi(0,\frac{(s_k-\sum_{i \ne k}u_{ik}\kappa_i)}{(u_{kk}+1/\tau_k^2)},(u_{kk}+1/\tau_k^2)^{-1})) \nonumber \\
     =&2 C_k \sqrt{\frac{1}{\tau_k^2(u_{kk}+1/\tau_k^2)}}\exp\{ \frac{(s_k-\sum_{i \ne k}u_{ik}\kappa_i)^2}{2(u_{kk}+1/\tau_k^2)} \} (1-\Phi(0,\frac{(s_k-\sum_{i \ne k}u_{ik}\kappa_i)}{(u_{kk}+1/\tau_k^2)},(u_{kk}+1/\tau_k^2)^{-1})) \nonumber
\end{align}
where $\Phi(x,a,b)$ is the normal cdf with mean $a$ and variance $b$ evaluated at $x$. So the full conditional for the weight is given as, 
\begin{align*}
\omega_k^\ast=&\frac{c_2 \omega_k}{c_1 (1-\omega_k)+c_2 \omega_k} \\
                       =& \frac{2  \sqrt{\frac{1}{\tau_k^2(u_{kk}+1/\tau_k^2)}}\exp\{ \frac{(s_k-\sum_{i \ne k}u_{ik}\kappa_i)^2}{2(u_{kk}+1/\tau_k^2)} \} (1-\Phi(0,\frac{(s_k-\sum_{i \ne k}u_{ik}\kappa_i )}{(u_{kk}+1/\tau_k^2)},(u_{kk}+1/\tau_k^2)^{-1})) \omega_k}{ (1-\omega_k)+ 2  \sqrt{\frac{1}{\tau_k^2(u_{kk}+1/\tau_k^2)}}\exp\{ \frac{(s_k-\sum_{i \ne k}u_{ik}\kappa_i )^2}{2(u_{kk}+1/\tau_k^2)} \} (1-\Phi(0,\frac{(s_k-\sum_{i \ne k}u_{ik}\kappa_i)}{(u_{kk}+1/\tau_k^2)},(u_{kk}+1/\tau_k^2)^{-1})) \omega_k} \\
                       =& \frac{\omega_k}{(1-\omega_k)/(2  \sqrt{\frac{1}{\tau_k^2(u_{kk}+1/\tau_k^2)}}\exp\{ \frac{(s_k-\sum_{i \ne k}u_{ik}\kappa_i)^2}{2(u_{kk}+1/\tau_k^2)} \} (1-\Phi(0,\frac{(s_k-\sum_{i \ne k}u_{ik}\kappa_i)}{(u_{kk}+1/\tau_k^2)},(u_{kk}+1/\tau_k^2)^{-1}))) + \omega_k}
\end{align*}
For the full conditional, $\pi_2(\kappa_k|\dots)$, the integrand in  \eqref{eq:FCkappa} implies that
\begin{equation}
\pi_2(\kappa_k|\dots) \sim \mathcal{N}(\frac{(s_k-\sum_{i \ne k}u_{ik}\kappa_i)}{(u_{kk}+1/\tau_k^2)},(u_{kk}+1/\tau_k^2)^{-1})\mathbb{I}_{\mathbb{R}^+}(\kappa_k)
\end{equation}
which is the normal distribution truncated at zero.
\subsection*{Full conditional for $\tau_k^2$}
From the expression in \eqref{eq:FCtau} it is clear that
\begin{align}
\pi(\tau_k^2|\dots) \propto& (\frac{1}{\tau_k^2})^{1/2}\exp \{\frac{-\kappa_k/2}{\tau_k^2} -\lambda_k^2 \tau_k^2/2 \} 
\end{align}
Define $s=\frac{1}{\tau_k^2}$, then $\tau_k^2=\frac{1}{s}$ and the Jacobian of the transformation is $|J|=\frac{1}{s^2}$ so that 
\begin{align*}
\pi(s|\dots) \propto & s^{-3/2}\exp \{\frac{-\kappa_ks^2-\lambda_k^2}{2s} \} \\
                  \propto & s^{-3/2}\exp \{\frac{-s^2+2\frac{\lambda_k}{\kappa_k}s-(\frac{\lambda_k}{\kappa_k})^2}{2\frac{1}{\kappa_k^2}s} \} \\
                             =&s^{-3/2}\exp \{\frac{-(s-\frac{\lambda_k}{\kappa_k})^2}{2\frac{1}{\kappa_k^2}s} \} \\
                            =&s^{-3/2}\exp \{\frac{-\lambda_k^2(s-\frac{\lambda_k}{\kappa_k})^2}{2\frac{\lambda_k^2}{\kappa_k^2}s} \}
\end{align*}
which is the inverse Gaussian distribution with mean $\frac{\lambda_k}{\kappa_k}$ and shape $\lambda_k^2$. To sample from the full conditional, $\tau_k^2|\dots$, one would first sample from the appropriate inverse Gaussian and then invert. Inverse Gaussian sampling is standard in statistical software packages such as~R. 
\subsection*{Full conditional for $\Sigma$}
It is clear that the target is proportional to $\pi(\bk,\Sigma|D)$, as defined Equation (12) of the manuscript. Then for any proposal $\Sigma^\ast \sim q(\Sigma^\ast|\Sigma)$, the Metropolis step accepts $\Sigma^\ast$ with probability $\min \bigg \{1, \frac{\pi(\bk,\Sigma^\ast|D)q(\Sigma|\Sigma^\ast)}{\pi(\bk,\Sigma|D)q(\Sigma^\ast|\Sigma)} \bigg \}$.
\section{R code implementation}
The Gibbs function below may be sourced into R and used to perform reaction network inference. The function takes as its arguments  (i) betas:  either those based on LSE or MEF,  (ii) $Q$: the matrix from the linear combination of kinetic parameters,  (iii) iter: the number of iterations of the sampler, and (iv) $n:$ the system size. The function output is the matrix containing posterior samples of the weights (omegas), kinetic parameters (kappas), and acceptance for the Metropolis step. Hyperparameters that can be modified: (v $\ge$ N+d+1)=degrees of freedom of the Wishart prior, weights[1,]=rep(0.5,r) may be modified to change prior edge probabilities. The proposal degrees of freedom, vprop, may be modified to vary the acceptance rate.    \\ \\
\#\# Requires betas, Q, iter, n=system size \\
Gibbs=function(betas,Q,iter,n){ \\
require(statmod); \\
require(truncnorm);\\
require(MCMCpack);\\
\\
accept=rep(0,iter); \\
r=length(Q[1,]);\\ 
d=length(Q[,1]);\\
N=2;\\
if(is.vector(betas))\{N=1; \\
d=length(betas); \\
sumBeta=rep(0,d); \\
sumBeta=betas;\} \\ \\
if(N$>$1)\{N=length(betas[,1]); \\
d=length(betas[1,]); \\
sumBeta=rep(0,d) \\
	for(j in 1:N)\{sumBeta=sumBeta+betas[j,];\} \\
\} \\
v=N+d+1; \\
vprop=n; \\
T=N*t(Q)\%*\%diag(1,d)\%*\%Q*n; \\
S=t(Q)\%*\%diag(1,d)\%*\%sumBeta*n; \\
weights=matrix(0,nrow=iter,ncol=r); \\
kappas=matrix(0,nrow=iter,ncol=r); \\
lambda=matrix(0,nrow=iter,ncol=r); \\
weights[1,]=rep(.5,r) \\
lambda=(1-weights[1,])/weights[1,]; \\
kappas[1,]=rep(1,r); \\
temp=c(); \\
tau=rep(1,r); \\
theta=10000; \\
Psi=diag(1/theta,d); \\
invPsi=diag(theta,d); \\
if(N$>$1){ \\
cv=(cov(betas)+Psi)/n; \\
} \\
else cv=(diag(var(betas),d)+Psi)/n; \\
\\
invCV=solve(cv)/vprop; \\
Sigma=diag(1,d); \\
detSigma=det(Sigma); \\
iSig=solve(Sigma); \\
for(i in 2:iter){ \\
	temp=kappas[(i-1),];	\\
	for(k in 1:r)\{ \\
		\\
		if(k==1)\{a=(S[k]-sum(T[(k+1):r,k]*kappas[(i-1),(k+1):r]))/(T[k,k]+1/tau[k]);\} \\
		if(k==r)\{a=(S[k]-sum(T[1:(r-1),k]*kappas[i,1:(r-1)]))/(T[k,k]+1/tau[k]);	\} \\	
				\\
	    if(k$>$1 \& k$<$r) \{a=(S[k]-sum(T[(k+1):r,k]*kappas[(i-1),(k+1):r])-sum(T[1:(k-1),k]*kappas[i,1:(k-1)]))/(T[k,k]+1/tau[k]);\} \\
		b=1/(T[k,k]+1/tau[k]); \\
		c2=1-pnorm(0,mean=a,sd=sqrt(b)); \\
		M=2*sqrt(b*(1/tau[k]))*exp(a\^{}2/(2*b))*c2; \\
                if(is.na(M))\{M=0;\}  \\
		weights[i,k]=weights[1,k]/((1-weights[1,k])/M+weights[1,k]); \\
		u=runif(1); \\
		if(u $<$ weights[i,k])\{ \\
                  kappas[i,k]=rtruncnorm(1,0,Inf,mean=a,sd=sqrt(b)); \\
                  tau[k]=1/rinvgauss(1,mean=lambda[k]/kappas[i,k],lambda[k]\^{}2);			\} \\
		else {kappas[i,k]=0;\} \\
		\\
	\} \\
	scale=diag(0,d) \\
	\\
if(N$>$1)\{ \\
for(j in 1:N)\{scale=scale+n*(Q\%*\%kappas[i,]-betas[j,])\%*\%t(Q\%*\%kappas[i,]-betas[j,]);\} \\
   \\
   \\
	Sigmaprop=rwish(vprop,Sigma/vprop); \\
	svdSigProp=svd(Sigmaprop); \\
	detSigmaprop=prod(svdSigProp\$d) \\
	iSigmaprop=svdSigProp\$v\%*\%diag(1/svdSigProp\$d)\%*\%t(svdSigProp\$u); \\
	\\
	u=runif(1); \\
    alpha=min(1,(detSigmaprop/detSigma)\^{}(-vprop+v/2-N/2)*exp(.5*vprop*sum(diag((iSig-invCV)\%*\%Sigmaprop-(iSigmaprop-invCV)\%*\%Sigma)))*exp(-.5*sum(diag(scale\%*\%(iSigmaprop-iSig))))); \\
    \\
    if(is.na(alpha))\{alpha=0;\} \\
    if(u$<$alpha)\{Sigma=Sigmaprop; iSig=iSigmaprop; detSigma=detSigmaprop; accept[i]=1;\} \\
\\
	T=N*t(Q)\%*\%iSig\%*\%Q*n; \\
	S=t(Q)\%*\%iSig\%*\%sumBeta*n; \\
}\\
if(N==1)\{ \\  
        scale=n*(Q\%*\%kappas[i,]-betas)\%*\%t(Q\%*\%kappas[i,]-betas);      \\
        Sigmaprop=rwish(vprop,Sigma/vprop); \\
	svdSigProp=svd(Sigmaprop); \\
	detSigmaprop=prod(svdSigProp\$d) \\
	iSigmaprop=svdSigProp\$v\%*\%diag(1/svdSigProp\$d)\%*\%t(svdSigProp\$u); \\
	\\
	u=runif(1); \\
    alpha=min(1,(detSigmaprop/detSigma)\^{}(-vprop+v/2-N/2)*exp(.5*vprop*sum(diag((iSig-invCV)\%*\%Sigmaprop-(iSigmaprop-invCV)\%*\%Sigma)))*exp(-.5*sum(diag(scale\%*\%(iSigmaprop-iSig))))); \\
    \\
    if(is.na(alpha))\{alpha=0;\} \\
    if(u$<$alpha)\{Sigma=Sigmaprop; iSig=iSigmaprop; detSigma=detSigmaprop; accept[i]=1;\} \\
\\
	T=N*t(Q)\%*\%iSig\%*\%Q*n; \\
	S=t(Q)\%*\%iSig\%*\%sumBeta*n; \\
  \}\\
  \\
    \}\\
Gibbs=cbind(weights,kappas,accept);\\
\}\\

\bibliographystyle{unsrt}
\bibliography{LFMCMC}

\end{document}